\newtheorem{theorem}{Theorem}
\newtheorem{lemma}{Lemma}
\newtheorem{definition}{Definition}
\newcommand{\x}{\bm{x}}
\title{Differentially Private and Fair Classification via Calibrated Functional Mechanism}
\author{ Jiahao Ding\textsuperscript{\rm 1}, Xinyue Zhang\textsuperscript{\rm 1}, Xiaohuan Li\textsuperscript{\rm 2}, Junyi Wang\textsuperscript{\rm 2}, Rong Yu\textsuperscript{\rm 3}, Miao Pan\textsuperscript{\rm 1}\\
\textsuperscript{\rm 1}{University of Houston}\\ 
\textsuperscript{\rm 2}{Guilin University of Electronic Technology}\\
\textsuperscript{\rm 3}{Guangdong University of Technology}\\
\{jding7, xzhang67, mpan2\}@uh.edu, \{lxhguet, wangjy\}@guet.edu.cn, yurong@gdut.edu.cn
}
\begin{document}

\maketitle

\begin{abstract}
Machine learning is increasingly becoming a powerful tool to make decisions in a wide variety of applications, such as medical diagnosis and autonomous driving. Privacy concerns related to the training data and unfair behaviors of some decisions with regard to certain attributes (e.g., sex, race) are becoming more critical. Thus, constructing a fair machine learning model while simultaneously providing privacy protection becomes a challenging problem. 
In this paper, we focus on the design of classification model with fairness and differential privacy guarantees by jointly combining functional mechanism and decision boundary fairness. In order to enforce $\epsilon$-differential privacy and fairness, we leverage the functional mechanism to add different amounts of Laplace noise regarding different attributes to the polynomial coefficients of the objective function in consideration of fairness constraint.
We further propose an utility-enhancement scheme, called relaxed functional mechanism by adding Gaussian noise instead of Laplace noise, hence achieving $(\epsilon,\delta)$-differential privacy. Based on the relaxed functional mechanism, we can design $(\epsilon,\delta)$-differentially private and fair classification model.
Moreover, our theoretical analysis and empirical results demonstrate that our two approaches achieve both fairness and differential privacy while preserving good utility and outperform the state-of-the-art algorithms. 
\end{abstract}

\section{Introduction}
In this big data era, machine learning has been becoming a powerful technique for automated and data-driven decision making processes in various domains, such as spam filtering, credit ratings, housing allocation, and so on. However, as the success of machine learning mainly rely on a vast amount of individual data (e.g., financial transactions, tax payments), there are growing concerns about the potential for privacy leakage and unfairness in training and deploying machine learning algorithms~\cite{fredrikson2015model,datta2015automated}. Thus, the problem of fairness and privacy in machine learning has attracted considerable attention.

Fairness-aware learning has received growing attentions in the machine learning field due to the social inequities and unfair behaviors observed in classification models. For example, a classification model of automated job hiring system is more likely to hire candidates from certain racial or gender groups \cite{Vivian,Sara}. Hence, substantial effort has centered on developing algorithmic methods for designing fair classification models and balancing the trade-off between accuracy and fairness, mainly including two groups: pre/post-processing methods \cite{dwork2012fairness,feldman2015certifying,hardt2016equality} and in-processing methods \cite{kamishima2011fairness,zafar2015fairness}. Pre/post-processing methods achieve fairness by directly changing values of the sensitive attributes or class labels in the training data. As pointed out in \cite{zafar2015fairness}, pre/post-processing methods treat the learning algorithm as a black box, which can result in unpredictable loss of the classification utility. Thus, in-processing methods, which introduce fairness constraints or regularization terms to the objective function to remove the discriminatory effect of classifiers, have been shown a great success.

At the same time, differential privacy \cite{dwork2014algorithmic} has emerged as the de facto standard for measuring the privacy leakage associated with algorithms on sensitive databases, which has recently received considerable attentions by large-scale corporations such as Google \cite{erlingsson2014rappor} and Microsoft \cite{ding2017collecting}, etc. Generally speaking, differential privacy ensures that there is no statistical difference to the output of a randomized algorithm whether a single individual opts in to, or out of its input. A large class of mechanisms has been proposed to ensure differential privacy. For instance, the Laplace mechanism is employed by introducing random noise drawn from the Laplace distribution to the output of queries such that the adversary will not be able to confirm a single individual is in the input with high confidence \cite{dwork2006calibrating}. To design private machine learning models, more complicated perturbation mechanisms have been proposed like objective perturbation \cite{chaudhuri2011differentially} and functional mechanism \cite{zhang2012functional}, which inject random noise into the objective function rather than model parameters.

Thus, in this paper, we mainly focus on achieving classification models that simultaneously provide differential privacy and fairness. As pointed out in recent study \cite{xu2019achieving}, achieving both requirements efficiently is quite challenging, due to the different aims of differential privacy and fairness. Differential privacy in a classification model focuses on the individual level, i.e., differential privacy guarantees that the model output is independent of whether any individual record presents or absents in the dataset, while fairness in a classification model focuses on the group level, i.e., fairness guarantees that the model predictions of the protected group (such as female group) are same to those of the unprotected group (such as male group). Lots of researches have emerged in achieving both privacy protection and fairness. Specifically, in \cite{dwork2012fairness}, Dwork et al. gave a new definition of fairness that is an extended definition of differential privacy. In \cite{hajian2015discrimination}, Hajian et al. imposed fairness and $k$-anonymity via a pattern sanitization method. Moreover, Ekstrand et al. in \cite{ekstrand2018privacy} put forward a set of questions about whether fairness are compatible with privacy. However, only Xu et al. in \cite{xu2019achieving} studied how to meet the requirements of both differential privacy and fairness in classification models by combining functional mechanism and decision boundary fairness together. Therefore, how to simultaneously meet the requirements of differential privacy and fairness in machine learning algorithms is under exploited. 

In this paper, we propose \textbf{P}urely and \textbf{A}pproximately \textbf{D}ifferential private and \textbf{F}air \textbf{C}lassification algorithms, called PDFC and ADFC, respectively, by incorporating functional mechanism and decision boundary covariance, a novel measure of decision boundary fairness.
As shown in \cite{kamiran2012data}, due to the correlation between input features (attributes), the discrimination of classification still exists even if removing the protected attribute from the dataset before training. Hence, different from \cite{xu2019achieving}, which adds same scale of noise in each attribute, in PDFC, we consider a calibrated functional mechanism, i.e., injecting different amounts of Laplace noise regarding different attributes to the polynomial coefficients of the constrained objective function to ensure $\epsilon$-differential privacy and reduce effects of discrimination. To further improve the model accuracy, in ADFC, we propose a relaxed functional mechanism by inserting Gaussian noise instead of Laplace noise and leverage it to perturb coefficients of the polynomial representation of the constrained objective function to enforce $(\epsilon,\delta)$-differential privacy and fairness. Our salient contributions are listed as follows.
\begin{itemize}
    \item We propose two approaches PDFC and ADFC to learn a logistic regression model with differential privacy and fairness guarantees by applying functional mechanism to a constrained objective function of logistic regression that decision boundary fairness constraint is treated as a penalty term and added to the original objective function.
    \item For PDFC, different magnitudes of Laplace noise regarding different attributes are added to the polynomial coefficients of the constrained objective function to enforce $\epsilon$-differential privacy and fairness.
    \item For ADFC, we further improve the model accuracy by proposing the relaxed functional mechanism based on Extended Gaussian mechanism, and leverage it to introduce Gaussian noise with different scales to perturb objective function.
    \item Using real-world datasets, we show that the performance of PDFC and ADFC significantly outperforms the baseline algorithms while jointly providing differential privacy and fairness.
\end{itemize}
The rest of paper is organized as follows. We first give the problem statement and background in differential privacy and fairness. Next, we present our two approaches PDFC and ADFC to achieve DP and fair classification. Finally, we give the numerical experiments based on real-world datasets and draw conclusion remarks. Due to the space limit, we leave all the proofs in the supplemental materials.

\section{Problem Statement}

This paper considers a training dataset $\mathscr{D}$ that includes $n$ tuples $t_1,t_2,\cdots,t_n$. We also denote each tuple $t_i=(\x_i,y_i)$ where the feature vector $\x_i$ contains $d$ attributes, i.e., $\x_i=(x_{i1},x_{i2},\cdots,x_{id})$, and $y_i$ is the corresponding label. Without loss of generality, we assume $\sqrt{\sum_{j=1}^d x_{ij}^2}\leq 1$ where $x_{ij}\geq 0$, and $y_i \in \{0,1\}$ for binary classification tasks. The objective is to construct a binary classification model $\rho(\x,w)$ with model parameters $w=(w_1,w_2,\cdots,w_d)$ that taken $\x$ as input, can output the prediction $\hat{y}$, by minimizing the empirical loss on the training dataset $\mathscr{D}$ over the parameter space $w$ of $\rho$.

In general, we have the following optimization problem.
\begin{align}
    w^*&=\arg \min_{w} f(\mathscr{D},w)=\arg \min_{w}\sum_{i=1}^n f(t_i,w)
\end{align}
where $f$ is the loss function. In this paper, we consider logistic regression as the loss function, i.e., $f(\mathscr{D},w)=\sum_{i=1}^n[\log(1+exp(\x_i^T w))-y_i\x_i^T w]$. Thus, the classification model has the form $\rho(\x,w^*)=\frac{exp(\x^T w^*)}{1+exp(\x^T w^*)}$.

Although there is no need to share the dataset during the training procedure, the risk of information leakage still exists when we release the classification model parameter $w^*$. For example, the adversary may perform model inversion attack \cite{fredrikson2015model} over the release model $w^*$ together with some background knowledge about the training dataset to infer sensitive information in the dataset. 

Furthermore, if labels in the training dataset are associated with a protected attribute $z_i$ (note that we denote $\x_i$ as unprotected attributes), like gender, the classifier may be biased, i.e., $P(\hat{y_i}=1|z_i=0) \ne P(\hat{y_i}=1|z_i=1)$, where we assume the protected attribute $z_i\in \{0,1\}$. According to \cite{pedreshi2008discrimination}, even if the protected attribute is not used to build the classification model, this unfair behavior may happen when the protected attribute is correlated with other unprotected attributes.

Therefore, in this paper, our objective is to learn a binary classification model, which is able to guarantee differential privacy and fairness while preserving good model utility. 

\section{Background}
In this section, we first introduce some background knowledge of differential privacy, which helps us to build private classification models. Then we present fairness definition, which helps us to enforce classification fairness.

\subsection{Differential Privacy}
Differential privacy is introduced to guarantee that the ability of an adversary to obtain additional information about any individual is independent of whether any individual record presents or absents in the dataset.
\begin{definition}[\bm{$\epsilon$}\bf{-Differential Privacy}]
A randomized Mechanism $\mathcal{A}$ is enforced by $\epsilon$-differential privacy, if for any two neighboring datasets $\mathscr{D},\mathscr{D}' \in \mathbb{D}$, i.e., differing at most one single data sample, and for any possible output $s$ in the output space of $\mathcal{A}$, it holds that $\Pr(\mathcal{A}(\mathscr{D}) = s )\leq e^{\epsilon} \Pr(\mathcal{A}(\mathscr{D}') = s).$
\end{definition}
The privacy parameter $\epsilon$ controls the strength of the privacy guarantee. A smaller value indicates a stronger privacy protection. 
Though differential privacy provides very strong guarantee, in some cases it may be too strong to have a good data utility. We then introduce a relaxation, ($\epsilon$,$\delta$)-differential privacy, that has been proposed in \cite{dwork2006our}.
\begin{definition}[(\bm{$\epsilon$},\bm{$\delta$})\bf{-Differential Privacy}]
A randomized Mechanism $\mathcal{A}$ is enforced by ($\epsilon$,$\delta$)-differential privacy, if for any two neighboring datasets $\mathscr{D},\mathscr{D}' \in \mathbb{D}$ differing at most one single data item, and for any possible output $s$ in the output space of $\mathcal{A}$, it holds that $Pr(\mathcal{A}(\mathscr{D}) = s )\leq e^{\epsilon} Pr(\mathcal{A}(\mathscr{D}') = s)+\delta$.
\end{definition}
Laplace mechanism~\cite{dwork2014algorithmic} and Extended Gaussian mechanism \cite{DBLP:journals/corr/abs-1906-01444} are common techniques for achieving differential privacy, both of which add random noise calibrated to the sensitivity of the query function $q$. 
\begin{theorem}[\bf{Laplace Mechanism}]
Given any function $q: \mathcal{X}^n \rightarrow \mathbb{R}^d$, the Laplace mechanism defined by 
\begin{align*}
    \mathcal{M}_L(\mathscr{D},q,\epsilon) = q(\mathscr{D}) + (\mathscr{Y}_1, \mathscr{Y}_2, \cdots, \mathscr{Y}_d)
\end{align*} 
preserves $\epsilon$-differential privacy, where $\mathscr{Y}_i$ are i.i.d. random variables drawn from $Lap(\Delta_{1}q/\epsilon)$ and $l_1$-sensitivity of the query $q$ is $ \Delta_{1}q = \sup_{\substack{\mathscr{D},\mathscr{D}'}}~\|q(\mathscr{D})-q(\mathscr{D}')\|_1$ taken over all neighboring datasets $\mathscr{D}$ and $\mathscr{D}'$.
\end{theorem}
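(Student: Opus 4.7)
The plan is to show that the ratio of the probability densities of the mechanism's output under two neighboring datasets is bounded above by $e^\epsilon$ pointwise, which is the standard route to establishing pure $\epsilon$-differential privacy for a continuous-output mechanism. First I would write down the joint density of $\mathcal{M}_L(\mathscr{D},q,\epsilon)$ evaluated at an arbitrary output $s=(s_1,\dots,s_d)\in\mathbb{R}^d$: since each $\mathscr{Y}_i$ is an independent $\mathrm{Lap}(\Delta_{1}q/\epsilon)$ variable, the density factorizes as $p_{\mathscr{D}}(s)=\prod_{i=1}^{d}\frac{\epsilon}{2\Delta_{1}q}\exp\!\bigl(-\tfrac{\epsilon\,|s_i-q(\mathscr{D})_i|}{\Delta_{1}q}\bigr)$, and similarly for $\mathscr{D}'$.

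Next, I would form the ratio $p_{\mathscr{D}}(s)/p_{\mathscr{D}'}(s)$. The normalizing constants cancel, leaving $\exp\!\bigl(\tfrac{\epsilon}{\Delta_{1}q}\sum_{i=1}^{d}\bigl(|s_i-q(\mathscr{D}')_i|-|s_i-q(\mathscr{D})_i|\bigr)\bigr)$. The key inequality is the reverse triangle inequality applied coordinate-wise: $|s_i-q(\mathscr{D}')_i|-|s_i-q(\mathscr{D})_i|\le |q(\mathscr{D})_i-q(\mathscr{D}')_i|$. Summing across $i$ produces exactly the $\ell_1$ distance $\|q(\mathscr{D})-q(\mathscr{D}')\|_1$, which by definition of the sensitivity is at most $\Delta_{1}q$.

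Substituting this bound back yields $p_{\mathscr{D}}(s)/p_{\mathscr{D}'}(s)\le \exp(\epsilon)$ for every $s$. Integrating over any measurable set $S\subseteq\mathbb{R}^d$ then gives $\Pr[\mathcal{M}_L(\mathscr{D})\in S]\le e^{\epsilon}\Pr[\mathcal{M}_L(\mathscr{D}')\in S]$, which is the desired $\epsilon$-differential privacy guarantee. I do not foresee a real obstacle in this argument; the only subtlety worth a careful sentence is translating the pointwise density ratio bound into the set-based privacy inequality (so the statement holds for all measurable outputs rather than single points), and noting that uniformity over neighboring pairs is built into the supremum defining $\Delta_{1}q$, which ensures the bound applies to every $(\mathscr{D},\mathscr{D}')$ simultaneously.
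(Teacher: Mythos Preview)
Your argument is the standard and correct proof of the Laplace mechanism: write out the product Laplace density, take the ratio, apply the reverse triangle inequality coordinate-wise to recover the $\ell_1$ distance, and bound it by $\Delta_1 q$. There is no gap.

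As for comparison with the paper: the paper does not actually prove this theorem. It is stated in the Background section as a known result (attributed to \cite{dwork2014algorithmic}) and is simply quoted without argument, so there is nothing in the paper to compare your proof against. Your write-up is exactly the textbook derivation one would expect to find in the cited reference.
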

\begin{theorem}[\bf{Extended Gaussian Mechanism}]\label{gaussin}
Given any function $q: \mathcal{X}^n \rightarrow \mathbb{R}^d$ and for any $\epsilon>0$, $\delta \in (0,1)$, the Extended Gaussian mechanism defined by \begin{align*}
    \mathcal{M}_G(\mathscr{D},q,\epsilon) = q(\mathscr{D}) +  (\mathscr{Y}_1, \mathscr{Y}_2, \cdots, \mathscr{Y}_d)
\end{align*} 
preserves $(\epsilon,\delta)$-differential privacy,
where $\mathscr{Y}_i$ are i.i.d drawn from a Gaussian distribution $\mathcal{N}(0, \sigma^2 I_d)$ with $\sigma\geq\frac{\sqrt{2}\Delta_{2q}}{2\epsilon}(\sqrt{\log (\sqrt{\frac{2}{\pi}}\frac{1}{\delta})}+\sqrt{\log (\sqrt{\frac{2}{\pi}}\frac{1}{\delta})+\epsilon})$ and $l_2$-sensitivity of the query $q$ is
$\Delta_{2}q = \sup_{\substack{\mathscr{D},\mathscr{D}'}}~\|q(\mathscr{D})-q(\mathscr{D}')\|_2$ taken over all neighboring datasets $\mathscr{D}$ and $\mathscr{D}'$.
\end{theorem}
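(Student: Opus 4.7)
The plan is to analyze the mechanism's privacy loss random variable, reduce the $(\epsilon,\delta)$-DP requirement to a one-dimensional Gaussian tail bound, and then solve a quadratic in $\sigma$ to recover the stated constant. By rotational symmetry of the isotropic Gaussian, for any two neighboring datasets $\mathscr{D},\mathscr{D}'$ one can rotate coordinates so that the shift $\Delta := q(\mathscr{D}) - q(\mathscr{D}')$ lies along a single axis with $|\Delta| \le \Delta_2 q$. Hence it suffices to prove the claim for the 1D mechanism that adds $\mathcal{N}(0,\sigma^2)$ noise to a scalar query of sensitivity at most $\Delta_2 q$.

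In the 1D setting, if $X \sim \mathcal{N}(0,\sigma^2)$, the log-ratio of the two output densities is $L(X) = (2X\Delta + \Delta^2)/(2\sigma^2)$. The canonical privacy-loss argument shows that $(\epsilon,\delta)$-DP follows once $\Pr[L(X) > \epsilon] \le \delta$, and this rearranges to the clean tail condition $\Pr[X > t] \le \delta$ with $t := \sigma^2\epsilon/\Delta - \Delta/2$. I then apply the Mill's-ratio bound $\Pr[X > t] \le \sqrt{2/\pi}\,(\sigma/(2t))\,e^{-t^2/(2\sigma^2)}$ (noting $\sqrt{2/\pi}/2 = 1/\sqrt{2\pi}$ so this is just the usual Gaussian tail bound rewritten to expose $\sqrt{2/\pi}$). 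Requiring this to be $\le \delta$, taking logs, and absorbing the prefactor $\sigma/(2t)$ into the logarithm yields the cleaner sufficient condition $t \ge \sigma\sqrt{2\log(\sqrt{2/\pi}/\delta)}$. Crucially, this derivation never imposes $\epsilon < 1$, which is the point of the ``extended'' Gaussian mechanism relative to the classical one.

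Back-substituting $t = \sigma^2\epsilon/\Delta_2 q - \Delta_2 q / 2$ and writing $c := \sqrt{\log(\sqrt{2/\pi}/\delta)}$ turns the sufficient condition into a quadratic inequality in $\sigma$:
\begin{equation*}
\sigma^2 \epsilon \;-\; \sigma\,\Delta_2 q\,\sqrt{2}\,c \;-\; (\Delta_2 q)^2/2 \;\ge\; 0.
\end{equation*}
Applying the quadratic formula and keeping the positive root gives exactly
\begin{equation*}
\sigma \;\ge\; \frac{\sqrt{2}\,\Delta_2 q}{2\epsilon}\Bigl(c + \sqrt{c^2 + \epsilon}\Bigr),
\end{equation*}
which matches the statement. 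The main obstacle in executing this plan is the tail-bound step: one must use a form of Mill's ratio that produces exactly the $\sqrt{2/\pi}$ constant inside the log, and one must justify that the dropped $\log(2t/\sigma)$ term is in fact nonnegative (equivalently, $2t \ge \sigma$) in the relevant regime, which can be verified a posteriori from the value of $\sigma$ obtained. With that in hand, the rotational reduction and the quadratic solve are routine.
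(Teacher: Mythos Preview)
Your proposal is correct and follows essentially the same route as the paper. The paper does not give a self-contained proof of this theorem (it is quoted from an external reference), but the identical argument appears in its proof of Theorem~4: reduce via rotational invariance of the isotropic Gaussian to a one-dimensional privacy loss, bound the bad event with the Mill's-ratio Gaussian tail estimate, and check that the stated $\sigma$ makes that tail small enough; the only cosmetic differences are that the paper controls the two-sided event $\Pr[|\mathscr{V}_1|\le r]\ge 1-\delta$ rather than your one-sided $\Pr[L>\epsilon]\le\delta$, and verifies the given $\sigma$ a posteriori rather than deriving it by solving the quadratic as you do.
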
 

{\bf{Functional Mechanism.}} Functional mechanism, introduced by \cite{zhang2012functional}, as an extension of the Laplace mechanism is designed for regression analysis. To preserve $\epsilon$-differential privacy, functional mechanism injects differentially private noise into the objective function $f(\mathscr{D},w)$ and then publishs a noisy model parameter $\hat{w}$ derived from minimizing the perturbed objective function $\hat{f}(\mathscr{D},w)$ rather than the original one. As a result of the objective function being a complex function of $w$, in functional mechanism, $f(\mathscr{D},w)$ is represented in polynomial forms trough Taylor Expansion. The model parameter $w$ is a vector consisting of several values $w_1,w_2,\cdots,w_d$. We denote $\phi(w)$ as a product of $w_1,w_2,\cdots,w_d$, namely, $\phi(w)=w_1^{c_1} w_2^{c_2}\cdots w_d^{c_d}$ for some $c_1,c_2,\cdots,c_d \in \mathbb{N}$. We also denote $\Phi_j(j\in \mathbb{N})$ as the set of all products of $w_1,w_2,\cdots,w_d$ with degree $j$, i.e., $\Phi_j=\{w_1^{c_1} w_2^{c_2}\cdots w_d^{c_d}|\sum_{l=1}^d c_l = j\}$.

According to the Stone-Weierstrass Theorem \cite{rudin1964principles}, any continuous and differentiable function can always be expressed as a polynomial form. Therefore, the objective function $f(\mathscr{D},w)$ can be written as follows
\begin{align}\label{ployre}
    f(\mathscr{D},w)=\sum_{i=1}^n \sum_{j=0}^J \sum_{\phi \in \Phi_j}\lambda_{\phi t_{i}} \phi(w),
\end{align}
where $\lambda_{\phi t_{i}}$ represents the coefficient of $\phi(w)$ in polynomial.

To preserve $\epsilon$-differential privacy, the objective function $f(\mathscr{D},w)$ is perturbed by adding Laplace noise into the polynomial coefficients, i.e., ${\lambda}_{\phi}=\sum_{i=1}^n \lambda_{\phi t_{i}}+ Lap(\Delta_{1}/\epsilon)$, where $\Delta_1 = 2 \max_{t}\sum_{j=1}^J\sum_{\phi \in \Phi_j}\|\lambda_{\phi t}\|_1$. And then the model parameter $\hat{w}$ is obtained by minimizing the noisy objective function $\hat{f}(\mathscr{D},w)$. The sensitivity of logistic regression is given in the following lemma
\begin{lemma}[\bf{$l_1$-Sensitivity of Logistic Regression}]\label{functional_mechanim_sensitivity}
Let $f(\mathscr{D},w)$ and $f(\mathscr{D}',w)$ be the logistic regression on two neighboring datasets $\mathscr{D}$ and $\mathscr{D}'$, respectively, and denote their polynomial representations as $f(\mathscr{D},w)=\sum_{i=1}^n \sum_{j=1}^J \sum_{\phi \in \Phi_j}\lambda_{\phi t_{i}} \phi(w)$ and $f(\mathscr{D}',w)=\sum_{i=1}^n \sum_{j=1}^J \sum_{\phi \in \Phi_j}\lambda_{\phi t'_{i}} \phi(w)$.
Then, we have the following inequality
\begin{align*}
    \Delta_1 &= \sum_{j=1}^2 \sum_{\phi \in \Phi_j} \|\sum_{t_i \in \mathscr{D}} \lambda_{\phi t_{i}} - \sum_{t'_i \in \mathscr{D}'} \lambda_{\phi t'_{i}} \|_1\\
    &\leq 2 \max_{t}\sum_{j=1}^2\sum_{\phi \in \Phi_j}\|\lambda_{\phi t}\|_1 \leq \frac{d^2}{4}+d,
\end{align*}
where $t_i$, $t'_i$ or $t$ is an arbitrary tuple.
\end{lemma}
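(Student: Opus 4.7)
The plan is to approximate the per-tuple logistic loss by its degree-two Taylor polynomial about $w=0$, read off the polynomial coefficients $\lambda_{\phi t_i}$, and then prove the two inequalities separately. Using $\log(1+e^z)=\log 2+\tfrac{1}{2}z+\tfrac{1}{8}z^2+O(z^4)$ with $z=\x_i^T w$ yields $f(t_i,w)\approx \log 2+(\tfrac{1}{2}-y_i)\,\x_i^T w+\tfrac{1}{8}(\x_i^T w)^2$, so matching \eqref{ployre} I would identify: for $\phi=w_k\in\Phi_1$, $\lambda_{\phi t_i}=(\tfrac{1}{2}-y_i)x_{ik}$; for $\phi=w_k^2\in\Phi_2$, $\lambda_{\phi t_i}=\tfrac{1}{8}x_{ik}^2$; and for $\phi=w_k w_l\in\Phi_2$ with $k<l$, $\lambda_{\phi t_i}=\tfrac{1}{4}x_{ik}x_{il}$ (identifying $w_k w_l$ and $w_l w_k$ as the same monomial).

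For the first inequality, I would use the definition of neighboring datasets: $\mathscr{D}$ and $\mathscr{D}'$ coincide except on one tuple, say $t$ versus $t'$. Cancelling the common tuples reduces the inner difference to $\lambda_{\phi t}-\lambda_{\phi t'}$, and the scalar triangle inequality gives $|\lambda_{\phi t}-\lambda_{\phi t'}|\leq |\lambda_{\phi t}|+|\lambda_{\phi t'}|\leq 2\max_{\tau}|\lambda_{\phi \tau}|$. Summing over $j\in\{1,2\}$ and $\phi\in\Phi_j$ immediately delivers the first bound.

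For the second inequality, I would bound the per-tuple mass using the stated hypotheses: $x_{ij}\geq 0$ together with $\sqrt{\sum_j x_{ij}^2}\leq 1$ imply $x_{ij}\in[0,1]$ and hence $\sum_j x_{ij}\leq d$, while $y_i\in\{0,1\}$ forces $|\tfrac{1}{2}-y_i|=\tfrac{1}{2}$. The degree-one piece is then $\sum_{\phi\in\Phi_1}|\lambda_{\phi t}|=\tfrac{1}{2}\sum_k x_{ik}\leq d/2$. For the degree-two piece I would exploit the algebraic identity $(\sum_k x_{ik})^2=\sum_k x_{ik}^2+2\sum_{k<l}x_{ik}x_{il}$ to collapse $\tfrac{1}{8}\sum_k x_{ik}^2+\tfrac{1}{4}\sum_{k<l}x_{ik}x_{il}=\tfrac{1}{8}(\sum_k x_{ik})^2\leq d^2/8$. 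Doubling and adding produces $2(d/2+d^2/8)=d+d^2/4$, as claimed.

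The main subtlety to flag is that the lemma is really a statement about the polynomial surrogate rather than the exact logistic loss: because $\log(1+e^z)$ admits no finite polynomial representation, the sensitivity is computed from the order-two Stone--Weierstrass/Taylor truncation that the functional mechanism actually perturbs, so the ``$\approx$'' in the opening step should be read as an equality for that surrogate objective. The one bookkeeping trap is making sure the cross monomials $w_k w_l$ and $w_l w_k$ are counted once so that the degree-two sum collapses cleanly to $\tfrac{1}{8}(\sum_k x_{ik})^2$; this is the single place where getting the constants right really matters.
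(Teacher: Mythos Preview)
Your proposal is correct and matches the paper's approach. The paper does not prove this lemma separately (it is cited from \cite{zhang2012functional}), but its proof of the fairness-augmented analogue (Lemma~\ref{L1}) proceeds exactly as you describe: reduce to the single differing tuple, apply the triangle inequality for the factor of $2\max_t$, and bound the degree-two mass via $\tfrac{1}{8}\sum_{1\le e,l\le d}x_{(e)}x_{(l)}=\tfrac{1}{8}\bigl(\sum_e x_{(e)}\bigr)^2\le d^2/8$ together with the degree-one bound $\tfrac{1}{2}\sum_e x_{(e)}\le d/2$.
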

\subsection{Classification Fairness}
The goal of classification fairness is to find a classifier that minimizes the empirical loss while guaranteeing certain fairness requirements.
Many fairness definitions have been proposed for in the literature including mistreatment parity \cite{zafar2017fairness}, demographic parity \cite{pedreshi2008discrimination}, etc. 

Demographic parity, the most widely-used fairness definition in the classification fairness domain, requires the decision made by the classifier is not dependent on the protected attribute $z$, for instance, sex or race. 

\begin{definition}{(\bf{Demographic Parity in a Classifier)}} Given a classification model $\hat{y}=\rho(\x,w)$ and a labeled dataset $\mathscr{D}$, the property of demographic parity in a classifier is defined by $\Pr(\hat{y}=1|z=1)=\Pr(\hat{y}=1|z=0)$
where $z\in\{0,1\}$ is the protected attribute. 
\end{definition}
Moreover, demographic parity is quantified in terms of the risk difference (RD) \cite{pedreschi2012study}, i.e., the difference of the positive decision made in between the protected group and unprotected group. Thus, the risk difference produced by a classifier is defined as $RD=|\Pr(\hat{y}=1|z=1)-\Pr(\hat{y}=1|z=0)|.$

One of the in-processing methods, called decision boundary fairness \cite{zafar2015fairness}, to ensure classification fairness is to find a model parameter $w$ that minimizes the loss function $f(\mathscr{D},w)$ under a fairness constraint. Thus, the fair classification problem is formulated as follows,
\begin{align}\label{fairfunction}
    &\mathit{minimize}~~~ f(\mathscr{D},w)\nonumber\\
    &\mathit{subject}~ \mathit{to} ~~~g(\mathscr{D},w)\leq \tau, g(\mathscr{D},w)\geq -\tau,
\end{align}
where $g(\mathscr{D},w)$ is a constraint term, and $\tau$ is the threshold. For instance, Zafar et al. \cite{zafar2015fairness} have proposed to adopt the decision boundary covariance to define the fairness constraint, i.e., 
\begin{align}\label{decsionb}
    g(\mathscr{D},w)&=\mathbb{E}[(z-\Bar{z})d(\x,w)]-\mathbb{E}[z-\Bar{z}]d(\x,w)\nonumber\\
    &\varpropto \sum_{i=1}^n(z_i-\Bar{z})d(\x_i,w),
\end{align}
where $\{d(\x_i,w)\}_{i=1}^n$ is decision boundary, $\Bar{z}$ is the average of the protected attribute and $\mathbb{E}[z-\Bar{z}]=0$. For logistic regression classification models, the decision boundary is defined by $\x^Tw$. The decision boundary covariance \eqref{decsionb} then reduces to $g(\mathscr{D},w)=\sum_{i=1}^n(z_i-\Bar{z})\x_i^T w$.

\section{Differentially Private and Fair Classification}
In this section, we first present our approach PDFC to achieve fair logistic regression with $\epsilon$-differentially private guarantee. Then we propose a relaxed functional mechanism by injecting Gaussian noise instead of Laplace noise to provide $(\epsilon,\delta)$-differential privacy. By leveraging the relaxed functional mechanism, we will show that our second approach ADFC can jointly provide $(\epsilon,\delta)$-differential privacy and fairness. 
\subsection{Purely DP and Fair Classification}

In order to meet the requirements of $\epsilon$-differential privacy and fairness, motivated by \cite{xu2019achieving}, we consider to combine the functional mechanism and decision boundary fairness. We first consider to transform the constrained optimization problem \eqref{fairfunction} into unconstrained problem by treating the fairness constraint as a penalty term, where the fairness constraints are shifted to the original objective function $f(\mathscr{D},w)$. Then, we have the new objective function $\Tilde{f}_{\mathscr{D}}(w)$ defined as $\Tilde{f}(\mathscr{D},w)=f(\mathscr{D},w)+\alpha_1 |g(\mathscr{D},w)-\tau|$, where we consider $\alpha_1$ as a hyperparameter to optimize the trade-off between model utility and fairness. For convenience of discussion, we set $\tau=0$ and choose suitable values to make $\alpha_1= 1$. Note that our theoretical results still hold if we choose other values of $\alpha_1$ and $\tau$. By equation \eqref{decsionb}, we have 
\begin{align}
    \Tilde{f}(\mathscr{D},w)=& \sum_{i=1}^n[\log(1+exp(\x_i^T w))-y_i\x_i^T w]\nonumber\\
    &+ \left|\sum_{i=1}^n(z_i-\Bar{z})\x_i^T w \right|.
 \end{align}
To apply functional mechanism, we first write the approximate objective function $\Bar{f}(\mathscr{D},w)$ based on \eqref{ployre} as follows.
\begin{align}\label{approximate}
    \Bar{f}(\mathscr{D},w)&= \sum_{i=1}^n \sum_{j=0}^2 \frac{f_{1}^{(j)}(0)}{j!}(\x_i^T w)^j - \left(\sum_{i=1}^n y_i\x_i^T \right)w \nonumber \\
    &~~~+\left|\sum_{i=1}^n (z_i-\Bar{z})\x_i^T w\right| \nonumber\\
    &=\sum_{i=1}^n \sum_{j=0}^2 \sum_{\phi \in \Phi_j}\Bar{\lambda}_{\phi t_{i}} \phi(w),
\end{align}
where $\Bar{\lambda}_{\phi t_{i}}$ denotes the coefficient of $\phi(w)$ in the polynomial of $\Bar{f}(t_i,w)$ and $f_{1}(\cdot)=\log (1+\exp{(\cdot)})$.
\begin{algorithm}[!t]
\caption{Purely DP and Fair Classification (PDFC) }
\label{alg:PDFC}
\algsetup{indent=2em}
\begin{algorithmic}[1]
 
\STATE \textbf{Input:} Dataset $\mathscr{D}$; The objective function $f(\mathscr{D},w)$; The fairness constraint $g(\mathscr{D},w)$;
The privacy budget $\epsilon_s$ for unprotected attribute $x_s$; The privacy budget $\epsilon_n$ for other unprotected attributes $\{\x \setminus x_s\}$; $l_1$-sensitivity $\Delta_1$.
\STATE \textbf{Output:} $\hat{w}$, $\epsilon$.
\STATE Set the approximate function $\Bar{f}{(\mathscr{D},w)}$ by equation \eqref{approximate}.
\STATE Set two sets $\Phi_{s} =\{\}$, $\Phi_{n} =\{\}$. 
\FOR{$1\leq j \leq 2$}
\FOR{each $\phi \in \Phi_j $}
\IF{$\phi$ includes $w_s$ for a particular attribute $x_s$}
\STATE Put $\phi$ into $\Phi_{s}$.
\ELSE
\STATE Put $\phi$ into $\Phi_{n}$.
\ENDIF
\ENDFOR
\ENDFOR
\FOR{$1\leq j\leq 2$}
\FOR{each $\phi \in \Phi_j $}
\IF{$\phi \in \Phi_s$}
\STATE Set $\hat{\lambda}_{\phi}=\sum_{i=1}^n \Bar{\lambda}_{\phi t_i} + Lap(\Delta_1/(\epsilon_s))$.
\ELSE
\STATE Set $\hat{\lambda}_{\phi}=\sum_{i=1}^n\Bar{\lambda}_{\phi t_i} + Lap(\Delta_1/(\epsilon_n))$.
\ENDIF
\ENDFOR
\ENDFOR
\STATE Let $\hat{f}(\mathscr{D},w)=\sum_{j=1}^2\sum_{\phi \in \Phi_j}\hat{\lambda}_{\phi}\phi(w)$.
\STATE Compute $\hat{w}=\arg \min_{w}\hat{f}(\mathscr{D},w)$.
\STATE Compute $\epsilon=\epsilon_s /d + \epsilon_n (d-1)/d.$
\STATE \textbf{return:} $\hat{w}$, $\epsilon$.
\end{algorithmic}
\end{algorithm}

The attributes involving in the dataset may not be independent from each other, which means some unprotected attributes in $\x$ are quite correlated with the protected attribute $z$. For instance, the protected attribute, like gender, may be correlated with the attribute, marital status. Thus, to reduce the discrimination between the protected attribute $z$ and the labels $y$, it is important to weaken the correlation between these most correlated attributes and protected attribute $z$. However, it is often impossible to determine the degree of relation between an unprotected attribute and the protected attribute. Therefore, we randomly select an unprotected attribute $x_{s}$ and leverage functional mechanism to add noise with large scale to the corresponding polynomial coefficients of the monomials involving $w_s$. Interestingly, this approach not only helps to reduce the correlation between attributes $x_s$ and $z$, but also improve the privacy on attribute $x_s$ to prevent model inversion attacks, as shown in \cite{wang2015regression}. 

The key steps of PDFC are outlined in Algorithm \ref{alg:PDFC}. We first set two different privacy budgets, $\epsilon_s$ and $\epsilon_n$, for attribute $x_s$ and the rest of attributes $\{\x \setminus x_s\}$. Before injecting noise to the coefficients, all coefficients $\phi$ should be separated into two groups $\Phi_s$ and $\Phi_n$ by considering whether $w_s$ involves in the corresponding monomials (i.e., whether their the coefficients contain attribute $x_s$). 
We then add Laplace noises drawn from $Lap(\Delta_1/\epsilon_s)$ and $Lap(\Delta_1/\epsilon_n)$ to the coefficients of $\phi \in \Phi_s $ and $\phi \in \Phi_n $ respectively to reconstruct the differentially private objective function $\hat{f}({\mathscr{D},w})$, where $\Delta_1$ can be found in Lemma \ref{L1}. Finally, the differentially private model parameter $\hat{w}$ is obtained by minimizing $\hat{f}({\mathscr{D},w})$. Note that $\hat{w}$ also ensures classification fairness due to the objective function involving fairness constraint.


\begin{lemma} \label{L1}
Let $\mathscr{D}$ and $\mathscr{D}'$ be any two neighboring datasets differing in at most one tuple. Let $\Bar{f}(\mathscr{D},w)$ and $\Bar{f}(\mathscr{D}',w)$ be the approximate objective function on $\mathscr{D}$ and $\mathscr{D}'$, then we have the following inequality,
\begin{align*}
    \Delta_1 = \sum_{j=1}^2 \sum_{\phi \in \Phi_j} \|\sum_{i=1}^n\Bar{\lambda}_{\phi t_{i}} -\sum_{i=1}^n\Bar{\lambda}_{\phi t_{i}'} \|_1 \leq \frac{d^2}{4}+3d.
\end{align*}
\end{lemma}
The following theorem shows the privacy guarantee of PDFC.
\begin{theorem}
The output model parameter $\hat{w}$ in PDFC (Algorithm \ref{alg:PDFC}) preserves $\epsilon$-differential privacy, where $\epsilon = \frac{1}{d}\epsilon_s+ \frac{d-1}{d}\epsilon_n$.
\end{theorem}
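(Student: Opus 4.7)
The plan is to use post-processing invariance to reduce the statement to a privacy bound on the noisy coefficient vector $\{\hat{\lambda}_\phi\}$, decompose the release into two independent Laplace mechanisms, and then combine them via sequential composition.

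First I would observe that $\hat{w}=\arg\min_w \hat{f}(\mathscr{D},w)$ is a deterministic function of the perturbed coefficients. Hence, by the post-processing property of differential privacy, any DP guarantee established for the coefficient-release step transfers verbatim to $\hat{w}$. The coefficient release itself factors as the composition, on independent randomness, of $\mathcal{M}_s$ (outputting $\{\hat{\lambda}_\phi:\phi\in\Phi_s\}$ with noise scale $\Delta_1/\epsilon_s$) and $\mathcal{M}_n$ (outputting $\{\hat{\lambda}_\phi:\phi\in\Phi_n\}$ with noise scale $\Delta_1/\epsilon_n$), where $\Delta_1\le d^2/4+3d$ by Lemma~\ref{L1}.

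Next, I would introduce the group-wise $\ell_1$-sensitivities
\begin{align*}
\Delta_s &= \sum_{j=1}^{2}\sum_{\phi\in\Phi_j\cap\Phi_s}\Big\|\sum_i \bar{\lambda}_{\phi t_i}-\sum_i \bar{\lambda}_{\phi t'_i}\Big\|_1,\\
\Delta_n &= \sum_{j=1}^{2}\sum_{\phi\in\Phi_j\cap\Phi_n}\Big\|\sum_i \bar{\lambda}_{\phi t_i}-\sum_i \bar{\lambda}_{\phi t'_i}\Big\|_1,
\end{align*}
so that $\Delta_s+\Delta_n=\Delta_1$. By the standard Laplace-mechanism analysis of Theorem~1, $\mathcal{M}_s$ is $(\epsilon_s\Delta_s/\Delta_1)$-DP and $\mathcal{M}_n$ is $(\epsilon_n\Delta_n/\Delta_1)$-DP, so sequential composition of pure-DP mechanisms with independent noise bounds the overall privacy loss by $(\Delta_s/\Delta_1)\,\epsilon_s+(\Delta_n/\Delta_1)\,\epsilon_n$.

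The main obstacle is establishing the structural inequalities $\Delta_s\le \Delta_1/d$ and $\Delta_n\le \Delta_1(d-1)/d$, which together give exactly the stated $\epsilon$. I would retrace the tuple-wise bound used to prove Lemma~\ref{L1}: each monomial's sensitivity contribution is at most twice the worst single-tuple coefficient, which is a product of entries of the differing $\x_n$ (times a constant absorbing $y$ and $|z-\bar{z}|$). Because every monomial in $\Phi_s$ carries a factor $x_s$, the linear part of $\Delta_s$ is at most $3x_{ns}\le 3$, and the quadratic part is at most $\tfrac{1}{4}x_{ns}\sum_{l} x_{nl}\le d/4$ under the coordinate-wise bound $x_{nj}\le 1$ used in the proof of Lemma~\ref{L1}. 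Summing gives $\Delta_s\le d/4+3=\Delta_1/d$ with the same loose bound $\Delta_1\le d^2/4+3d$, and the complementary inequality $\Delta_n\le \Delta_1(d-1)/d$ follows by writing $\Delta_n=\Delta_1-\Delta_s$. Plugging these into the composed privacy cost and invoking post-processing delivers the claimed $\epsilon=\epsilon_s/d+\epsilon_n(d-1)/d$-differential privacy for $\hat{w}$.
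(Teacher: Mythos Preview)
Your overall strategy---post-processing plus composition, reducing to group-wise $\ell_1$-sensitivity bounds $\Delta_s\le\Delta_1/d$ and $\Delta_n\le\Delta_1(d-1)/d$---is exactly the route the paper takes (the paper's direct density-ratio calculation is just composition unrolled, and it defers the group-wise bounds to \cite{wang2015regression}). However, your derivation of those two bounds contains genuine errors.

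First, your quadratic bound for $\Delta_s$ drops the factor of $2$ coming from $2\max_t\sum_{\phi}|\bar\lambda_{\phi t}|$. The single-tuple quadratic contribution over $\Phi_s$ is indeed at most $\tfrac{1}{4}x_{ns}\sum_l x_{nl}$, but after doubling you get $\tfrac{1}{2}x_{ns}\sum_l x_{nl}$, and with the coordinate-wise bound $x_{nl}\le 1$ this yields $d/2$, not $d/4$. Hence your argument only gives $\Delta_s\le d/2+3$, which exceeds $\Delta_1/d=d/4+3$ for every $d\ge 2$. The inequality $\Delta_s\le\Delta_1/d$ is in fact true, but proving it requires using $\|\x\|_2\le 1$ (so $\|\x\|_1\le\sqrt d$) rather than the coordinate-wise bound; the loose bound you invoke is not sharp enough.

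Second, the ``complementary'' step is logically backwards. From $\Delta_n=\Delta_1-\Delta_s$ and $\Delta_s\le\Delta_1/d$ you can only conclude $\Delta_n\ge\Delta_1(d-1)/d$, the opposite inequality. (Moreover, $\Delta_s+\Delta_n$ equals the \emph{pair-specific} total sensitivity, not the fixed noise-scale constant $d^2/4+3d$; conflating the two is what makes the subtraction look like it works.) The bound on $\Delta_n$ must be established directly, by the same kind of tuple-wise accounting restricted to monomials not involving $w_s$.
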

\subsection{Approximately DP and Fair Classification}
We now focus on using the relaxed version of $\epsilon$-differential privacy, i.e., $(\epsilon,\delta)$-differential privacy to further improve the utility of differentially private and fair logistic regression. Hence, in order to satisfy $(\epsilon,\delta)$-differential privacy, we propose the relaxed functional mechanism by making use of Extended Gaussian mechanism. 
As shown in Theorem \ref{gaussin}, before applying Extended Gaussian mechanism, we first calculate the sensitivity of a query function, i.e., the objective function of logistic regression $f(\mathscr{D},w)=\sum_{i=1}^n[\log(1+exp(\x_i^T w))-y_i\x_i^T w]$, given in the following lemma.

\begin{lemma}[\bf{$l_2$-Sensitivity of Logistic Regression}]\label{relaxed_senstivity}
For polynomial representations of logistic regression, two $f(\mathscr{D},w)$ and $f(\mathscr{D}',w)$ given in Lemma \ref{functional_mechanim_sensitivity}, we have the following inequality
\begin{align*}
    \Delta_2 =\|\mathscr{A}_1-\mathscr{A}_2\|_2 \leq \sqrt{\frac{d^2}{16}+d},
\end{align*}
where we denote $\mathscr{A}_1=\left\{\sum_{i=1}^n {\lambda}_{\phi t_{i}} \right\}_{\phi \in \cup_{j=1}^J \Phi_{j}} $ and $\mathscr{A}_2=\left\{\sum_{i=1}^n {\lambda}_{\phi t_{i}'} \right\}_{\phi \in \cup_{j=1}^J \Phi_{j}} $ as the set of polynomial coefficients of $f(\mathscr{D},w)$ and $f(\mathscr{D}',w)$. And we denote $t_i$ or $t_i'$ as an arbitrary tuple.
\end{lemma}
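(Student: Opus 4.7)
The plan is to reduce the sensitivity computation to a single-tuple bound via the triangle inequality, then compute the coefficients of the logistic-regression objective using its second-order Taylor expansion and sum the squares. Assume without loss of generality that the neighboring datasets $\mathscr{D}$ and $\mathscr{D}'$ differ only in the last tuple $t_n$ versus $t_n'$. Then for every $\phi\in\Phi_j$, $\sum_i \lambda_{\phi t_i}-\sum_i \lambda_{\phi t_i'}=\lambda_{\phi t_n}-\lambda_{\phi t_n'}$, so by the triangle inequality
\begin{align*}
\|\mathscr{A}_1-\mathscr{A}_2\|_2 \leq \|\{\lambda_{\phi t_n}\}_{\phi}\|_2 + \|\{\lambda_{\phi t_n'}\}_{\phi}\|_2 \leq 2\max_{t}\sqrt{\sum_{j=1}^{2}\sum_{\phi\in\Phi_j}\lambda_{\phi t}^2}.
\end{align*}

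Next I would identify the coefficients by expanding $f(t,w)=\log(1+\exp(\x^T w))-y\,\x^T w$ through its second-order Taylor polynomial around $0$, mirroring what is done to derive $\bar f(\mathscr{D},w)$ in \eqref{approximate}. Using $f_1(0)=\log 2$, $f_1'(0)=1/2$, $f_1''(0)=1/4$, the order-$1$ coefficient attached to monomial $w_k$ equals $(\tfrac12-y)x_k$ and the order-$2$ coefficient attached to $w_k w_l$ (with ordered pairs, matching the expansion $(\x^T w)^2=\sum_{k,l}x_k x_l w_k w_l$) equals $\tfrac18 x_k x_l$. Summing the squares of these coefficients for a single tuple $t=(\x,y)$ with $y\in\{0,1\}$:
\begin{align*}
\sum_{j=1}^{2}\sum_{\phi\in\Phi_j}\lambda_{\phi t}^2
=\sum_{k=1}^{d}\left(\tfrac12-y\right)^{2}x_k^{2}+\sum_{k,l=1}^{d}\left(\tfrac18 x_k x_l\right)^{2}
=\tfrac14\sum_{k}x_k^{2}+\tfrac{1}{64}\Bigl(\sum_{k}x_k^{2}\Bigr)^{2}.
\end{align*}

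Finally I would invoke the assumption $\sqrt{\sum_k x_{k}^2}\leq 1$ only through its componentwise consequence $|x_k|\leq 1$, which yields the loose but convenient bound $\sum_k x_k^2\leq d$ (this is the same slack used to obtain the $\tfrac{d^2}{4}+d$ bound in Lemma \ref{functional_mechanim_sensitivity}, so consistency is preserved). Substituting gives $\sum_{\phi}\lambda_{\phi t}^2\leq \tfrac{d}{4}+\tfrac{d^{2}}{64}$, and multiplying by $2$ as in the triangle-inequality step yields
\begin{align*}
\Delta_2 \leq 2\sqrt{\tfrac{d}{4}+\tfrac{d^{2}}{64}}=\sqrt{d+\tfrac{d^{2}}{16}},
\end{align*}
as claimed. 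The computation is mostly mechanical; the main thing to be careful about is the coefficient bookkeeping in $\Phi_2$, specifically making sure the monomials $w_k w_l$ are enumerated in the same ordered-index convention that comes directly out of the expansion $(\x^T w)^2=\sum_{k,l}x_k x_l w_k w_l$, since an unordered-pair convention would rescale the $w_k w_l$ ($k\neq l$) coefficients by $2$ and alter the constants. Using the ordered convention gives exactly the factor $\tfrac{1}{64}$ that produces the target $\tfrac{d^2}{16}$ after the factor-$2$ amplification.
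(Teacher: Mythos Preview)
Your proposal is correct and mirrors the paper's own proof essentially step for step: reduce to the single differing tuple, read off the order-$1$ coefficients $(\tfrac12-y)x_k$ and order-$2$ coefficients $\tfrac18 x_k x_l$ from the Taylor expansion, bound $\sum_k x_k^2\le d$ via $|x_k|\le 1$, and multiply by $2$ from the triangle inequality to obtain $\sqrt{d+d^2/16}$. Your explicit remark about the ordered-pair convention in $\Phi_2$ is exactly the bookkeeping the paper is tacitly using when it writes the $(d+d^2)$-dimensional vector $\mathscr{E}$.
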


We then perturb $f(\mathscr{D},w)$ by injecting Gaussian noise drawn from $\mathcal{N}(0,\sigma^2)$ with $\sigma = \frac{\sqrt{2}\Delta_2}{2\epsilon}(\sqrt{\log (\sqrt{\frac{2}{\pi}}\frac{1}{\delta}})+\sqrt{\log (\sqrt{\frac{2}{\pi}}\frac{1}{\delta})+\epsilon})$ into its polynomial coefficients, and obtain the differentially private model parameter $\hat{w}$ by minimizing the noisy function $\hat{f}(\mathscr{D},w)$, as shown in Algorithm \ref{realxed_functional}. Finally, we provide a privacy guarantee of proposed relaxed functional mechanism by the following theorem.

\begin{theorem}
The relaxed functional mechanism in Algorithm \ref{realxed_functional} guarantees $(\epsilon,\delta)$-differential privacy.
\end{theorem}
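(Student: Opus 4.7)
The plan is to reduce this theorem to a direct application of the Extended Gaussian Mechanism (Theorem \ref{gaussin}) combined with the post-processing invariance of differential privacy.

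First, I would view the entire vector of summed polynomial coefficients as the output of a single query function. Concretely, define $q:\mathcal{X}^n\to\mathbb{R}^m$ by
\begin{align*}
q(\mathscr{D})=\Bigl\{\textstyle\sum_{i=1}^n\lambda_{\phi t_i}\Bigr\}_{\phi\in\cup_{j=0}^{2}\Phi_j},
\end{align*}
where $m$ is the total number of monomials appearing in the degree-$2$ Taylor expansion of $f(\mathscr{D},w)$. Algorithm \ref{realxed_functional} releases $q(\mathscr{D})$ plus an i.i.d.\ Gaussian noise vector with scale $\sigma$ and then computes $\hat{f}(\mathscr{D},w)$ and $\hat{w}=\arg\min_w\hat{f}(\mathscr{D},w)$ from this noisy vector.

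Second, I would plug the sensitivity bound from Lemma \ref{relaxed_senstivity}, namely $\Delta_2 q\le\sqrt{d^2/16+d}$, into the noise-scale formula of Theorem \ref{gaussin}. The chosen $\sigma=\tfrac{\sqrt{2}\Delta_2}{2\epsilon}\bigl(\sqrt{\log(\sqrt{2/\pi}\,\tfrac{1}{\delta})}+\sqrt{\log(\sqrt{2/\pi}\,\tfrac{1}{\delta})+\epsilon}\bigr)$ is exactly the threshold required, so the Extended Gaussian Mechanism immediately certifies that the noisy coefficient vector $\hat{q}(\mathscr{D})=q(\mathscr{D})+(\mathscr{Y}_1,\ldots,\mathscr{Y}_m)$ is $(\epsilon,\delta)$-differentially private.

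Third, I would invoke post-processing: once $\hat{q}(\mathscr{D})$ is $(\epsilon,\delta)$-DP, any function that depends on $\mathscr{D}$ only through $\hat{q}(\mathscr{D})$ is also $(\epsilon,\delta)$-DP. In particular, assembling the monomials into $\hat{f}(\mathscr{D},w)=\sum_{j=0}^{2}\sum_{\phi\in\Phi_j}\hat{\lambda}_\phi\phi(w)$ and then solving $\hat{w}=\arg\min_w\hat{f}(\mathscr{D},w)$ are both data-independent operations on the released noisy coefficients, so the final model parameter $\hat{w}$ inherits $(\epsilon,\delta)$-differential privacy.

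I expect this proof to be short and essentially mechanical; the only subtle point, and the step I would be most careful about, is verifying that every use of the dataset in forming $\hat{f}$ is channelled through $\hat{q}(\mathscr{D})$ (e.g., no additional data-dependent preprocessing enters the coefficients), so that the post-processing step is justified. Once that is spelled out, the privacy guarantee follows immediately from Lemma \ref{relaxed_senstivity} and Theorem \ref{gaussin}.
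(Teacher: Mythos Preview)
Your argument is correct, and it is in fact the natural way to prove this statement: view the coefficient vector as a single query, invoke Lemma~\ref{relaxed_senstivity} for its $\ell_2$-sensitivity, apply Theorem~\ref{gaussin} with the stated $\sigma$, and finish by post-processing. One cosmetic point: Algorithm~\ref{realxed_functional} only perturbs $j=1,\ldots,J$, not $j=0$; the degree-$0$ term is a data-independent constant, so including or excluding it changes neither the sensitivity nor the minimizer, but you may want your definition of $q$ to match the algorithm exactly.

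The paper, however, does \emph{not} argue this way. Instead of citing Theorem~\ref{gaussin} as a black box, it essentially reproves the Extended Gaussian Mechanism inline: it writes out the log-likelihood ratio of the noisy coefficient vector under $\mathscr{D}$ versus $\mathscr{D}'$, uses the rotational invariance of the spherical Gaussian to reduce to the one-dimensional component aligned with the difference vector $\mathscr{B}$, bounds the privacy loss by $\tfrac{1}{2\sigma^2}(\Delta_2^2+2|\mathscr{V}_1|\Delta_2)$, and then applies a Gaussian tail bound to show $|\mathscr{V}_1|$ is small enough with probability at least $1-\delta$. Your route is shorter, avoids duplicating an argument already packaged in Theorem~\ref{gaussin}, and makes the role of post-processing explicit. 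The paper's route is more self-contained and, more importantly, lays out exactly the decomposition that is reused in the proof of Theorem~5 for ADFC, where two blocks of coefficients receive noise at different scales and a single black-box invocation of Theorem~\ref{gaussin} no longer suffices.
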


\begin{algorithm}[!t]
\caption{Relaxed Functional Mechanism }
\label{realxed_functional}
\algsetup{indent=2em}
\begin{algorithmic}[1]
\STATE \textbf{Input:} Dataset $\mathscr{D}$; The objective function ${f}{(\mathscr{D},w)}=\sum_{i=1}^n \sum_{j=1}^J \sum_{\phi \in \Phi_j}{\lambda}_{\phi t_{i}} \phi(w)$;
The privacy parameters $\epsilon,\delta$.
\STATE \textbf{Output:} $\hat{w}$
\STATE Set $\Delta_2$ according Lemma \ref{relaxed_senstivity}.
\FOR{$1\leq j\leq J$}
\FOR{each $\phi \in \Phi_j $}
\STATE Set ${\lambda}_{\phi}=\sum_{i=1}^n{\lambda}_{\phi t_i} + \mathcal{N}(0,\sigma^2)$, where $\sigma = \frac{\sqrt{2}\Delta_2}{2\epsilon}(\sqrt{\log (\sqrt{\frac{2}{\pi}}\frac{1}{\delta}})+\sqrt{\log (\sqrt{\frac{2}{\pi}}\frac{1}{\delta})+\epsilon})$.
\ENDFOR
\ENDFOR
\STATE Let $\hat{f}(\mathscr{D},w)=\sum_{j=1}^J\sum_{\phi \in \Phi_j}{\lambda}_{\phi}\phi(w)$.
\STATE Compute $\hat{w}=\arg \min_{w}\hat{f}(\mathscr{D},w)$.
\STATE \textbf{return:} $\hat{w}$.
\end{algorithmic}
\end{algorithm}
Our second approach called, ADFC, applies the relaxed functional mechanism into the objective function with decision boundary fairness constraint to enforce $(\epsilon,\delta)$-differential privacy and fairness. As shown in Algorithm \ref{delta_dp}, we first derive the polynomial representation $\Bar{f}(\mathscr{D},w)$ according to (\ref{approximate}), and employ random Gaussian noise to perturb the objective function $\Bar{f}(\mathscr{D},w)$, i.e., injecting Gaussian noise into its polynomial coefficients. Furthermore, we also allocate differential privacy parameters, $(\epsilon_s,\delta_s)$ and $(\epsilon_n,\delta_n)$ for a particular unprotected attribute $x_s$ and the rest of unprotected attributes $\{\x \setminus x_s\}$ to improve the privacy on attribute $x_s$ and reduce the correlation between attributes $x_s$ and $z$. Hence, we add random noise drawn from $\mathcal{N}(0, \sigma_s^2)$
to polynomial coefficients of $\phi \in \Phi_s$.
For polynomial coefficients in $\Phi_n$, we inject noise drawn from $\mathcal{N}(0, \sigma_n^2)$.

\begin{lemma} \label{L2}
Let $\mathscr{D}$ and $\mathscr{D}'$ be any two neighboring datasets differing in at most one tuple. Let $\Bar{f}(\mathscr{D},w)$ and $\Bar{f}(\mathscr{D}',w)$ be the approximate objective function on $\mathscr{D}$ and $\mathscr{D}'$, then we have the following inequality,
\begin{align*}
    \Delta_2' =\|\mathscr{A}_1'-\mathscr{A}_2'\|_2 \leq \sqrt{\frac{d^2}{16}+9d}.
\end{align*}
where we denote $\mathscr{A}_1'=\left\{\sum_{i=1}^n \bar{\lambda}_{\phi t_{i}} \right\}_{\phi \in \cup_{j=1}^2 \Phi_{j}} $ and $\mathscr{A}_2'=\left\{\sum_{i=1}^n \bar{\lambda}_{\phi t_{i}'} \right\}_{\phi \in \cup_{j=1}^2 \Phi_{j}} $ as the set of polynomial coefficients of $\Bar{f}(\mathscr{D},w)$ and $\Bar{f}(\mathscr{D}',w)$. And we denote $t_i$ or $t_i'$ as an arbitrary tuple.
\end{lemma}

\begin{algorithm}[!t]
\caption{Approximately DP and Fair Classification (ADFC) }
\label{delta_dp}
\algsetup{indent=2em}
\begin{algorithmic}[1]
\STATE \textbf{Input:} Dataset $\mathscr{D}$; The objective function $f(\mathscr{D},w)$; The fairness constraint $g(\mathscr{D},w)$;
The privacy parameters $\epsilon_s,\delta_s$ for unprotected attribute $x_s$; The privacy parameters  $\epsilon_n,\delta_n$ for other unprotected attributes $\{\x \setminus x_s\}$.
\STATE \textbf{Output:} $\hat{w}$, $\epsilon$ and $\delta$.
\STATE Set the approximate function $\Bar{f}{(\mathscr{D},w)}$ by equation \eqref{approximate}.
\STATE Set two sets $\Phi_{s} =\{\}$, $\Phi_{n} =\{\}$. 
\FOR{$1\leq j \leq 2$}
\FOR{each $\phi \in \Phi_j $}
\IF{$\phi$ includes $w_s$ for a particular attribute $x_s$}
\STATE Put $\phi$ into $\Phi_{s}$.
\ELSE
\STATE Put $\phi$ into $\Phi_{n}$.
\ENDIF
\ENDFOR
\ENDFOR
\STATE Set $l_2$-sensitivity $\Delta_2'$ by Lemma \ref{L2}.
\FOR{$1\leq j\leq 2$}
\FOR{each $\phi \in \Phi_j $}
\IF{$\phi \in \Phi_s$}
\STATE Set $\hat{\lambda}_{\phi}=\sum_{i=1}^n\bar{\lambda}_{\phi t_i} + \mathcal{N}(0,\sigma_s^2)$, where $\sigma_s = \frac{\sqrt{2}\Delta_2'}{2\epsilon_s}(\sqrt{\log (\sqrt{\frac{2}{\pi}}\frac{1}{\delta_s}})+\sqrt{\log (\sqrt{\frac{2}{\pi}}\frac{1}{\delta_s})+\epsilon_s})$.
\ELSE
\STATE Set $\hat{\lambda}_{\phi}=\sum_{i=1}^n\bar{\lambda}_{\phi t_i} + \mathcal{N}(0,\sigma_n^2)$, where $\sigma_n = \frac{\sqrt{2}\Delta_2'}{2\epsilon_n}(\sqrt{\log (\sqrt{\frac{2}{\pi}}\frac{1}{\delta_n}})+\sqrt{\log (\sqrt{\frac{2}{\pi}}\frac{1}{\delta_n})+\epsilon_n})$.
\ENDIF
\ENDFOR
\ENDFOR
\STATE Let $\hat{f}(\mathscr{D},w)=\sum_{j=1}^2\sum_{\phi \in \Phi_j}\hat{\lambda}_{\phi}\phi(w)$.
\STATE Compute $\hat{w}=\arg \min_{w}\hat{f}(\mathscr{D},w)$.
\STATE Compute $\epsilon = \frac{1}{d}\epsilon_s+ \frac{d-1}{d}\epsilon_n$ and $\delta = 1-(1-\delta_s)(1-\delta_n)$.
\STATE \textbf{return:} $\hat{w}$, $\epsilon$ and $\delta$.
\end{algorithmic}
\end{algorithm}
Finally, by minimizing the differentially private objective function $\hat{f}(\mathscr{D},w)$, we derive the model parameter $\hat{w}$, which achieves differential privacy and fairness at the same time. We now show that ADFC satisfies $(\epsilon,\delta)$-differential privacy in the following theroem.

\begin{theorem}
The output model parameter $\hat{w}$ in ADFC (Algorithm \ref{delta_dp}) guarantees $(\epsilon,\delta)$-differential privacy, where $\epsilon = \frac{1}{d}\epsilon_s+ \frac{d-1}{d}\epsilon_n$ and $\delta = 1-(1-\delta_s)(1-\delta_n)$.
\end{theorem}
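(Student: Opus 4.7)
The plan is to view ADFC as the independent composition of two Gaussian perturbation sub-mechanisms, one acting on the coefficients in $\Phi_s$ and the other on the coefficients in $\Phi_n$, to show that each sub-mechanism satisfies differential privacy with its own budget, and then to combine the two guarantees via standard composition followed by post-processing.

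First, I would analyze the sub-mechanism that perturbs coefficients in $\Phi_s$ with i.i.d.\ noise $\mathcal{N}(0,\sigma_s^2)$. The key observation, mirroring the PDFC argument (based on the per-attribute sensitivity decomposition of Wang et al.\ 2015), is that because only a single unprotected attribute $x_s$ appears in the monomials indexed by $\Phi_s$, the contribution of these coefficients to the global $l_2$-sensitivity bound of Lemma \ref{L2} is at most $\Delta_2'/d$. Plugging this restricted sensitivity into the Extended Gaussian Mechanism (Theorem \ref{gaussin}) and using the choice $\sigma_s=\tfrac{\sqrt{2}\Delta_2'}{2\epsilon_s}(\sqrt{\log(\sqrt{2/\pi}/\delta_s)}+\sqrt{\log(\sqrt{2/\pi}/\delta_s)+\epsilon_s})$ from Algorithm \ref{delta_dp}, the noisy release of the $\Phi_s$ coefficients satisfies $(\epsilon_s/d,\delta_s)$-differential privacy. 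An entirely analogous argument, with the sensitivity of the $\Phi_n$ coefficients bounded by $(d-1)\Delta_2'/d$, shows that the $\Phi_n$ sub-mechanism is $((d-1)\epsilon_n/d,\delta_n)$-differentially private.

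Next, since the noise injected into $\Phi_s$ and $\Phi_n$ is drawn independently, I would invoke the standard composition theorem: the composition of an $(\epsilon_1,\delta_1)$-DP mechanism with an independent $(\epsilon_2,\delta_2)$-DP mechanism is $(\epsilon_1+\epsilon_2,\,1-(1-\delta_1)(1-\delta_2))$-DP. Combining the two sub-mechanisms yields that the release of the full set of noisy polynomial coefficients, and hence the noisy objective $\hat f(\mathscr{D},w)$, is $\bigl(\tfrac{1}{d}\epsilon_s+\tfrac{d-1}{d}\epsilon_n,\,1-(1-\delta_s)(1-\delta_n)\bigr)$-differentially private. Finally, $\hat w=\arg\min_w \hat f(\mathscr{D},w)$ is a deterministic function of $\hat f(\mathscr{D},w)$, so the post-processing property of differential privacy transfers the guarantee to $\hat w$ without any degradation, matching the statement of the theorem.

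The main obstacle I expect is the first step, namely proving that the $l_2$-sensitivity contribution of the $\Phi_s$ coefficients is at most $\Delta_2'/d$ (and similarly $(d-1)\Delta_2'/d$ for $\Phi_n$). This requires a careful per-attribute decomposition of the coefficients $\bar\lambda_{\phi t_i}$ of the polynomial representation \eqref{approximate}, using the normalization $\sum_{j=1}^d x_{ij}^2\le 1$ with $x_{ij}\ge 0$ to bound the per-attribute share of the degree-one and degree-two terms (including the decision-boundary penalty $|\sum_i(z_i-\bar z)\x_i^T w|$). Once this symmetric sensitivity split is established, the remainder of the proof is a routine application of Theorem \ref{gaussin}, composition, and post-processing.
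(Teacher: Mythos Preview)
Your plan is essentially the paper's own proof, just packaged modularly: the paper also splits the perturbation into the $\Phi_s$ and $\Phi_n$ blocks, invokes the same per-block $l_2$-sensitivity bounds $\|\mathscr{B}'\|_2\le\Delta_2'/d$ and $\|\mathscr{B}''\|_2\le(d-1)\Delta_2'/d$ (asserted there, as in the PDFC proof, without further derivation beyond the Wang et al.\ reference), and combines the two Gaussian guarantees via independence of the noise, followed by post-processing. The only cosmetic difference is that the paper unrolls the Gaussian-mechanism analysis inline---writing out the log-density ratio, rotating to a basis aligned with the sensitivity direction, and applying a Gaussian tail bound on each block---instead of invoking Theorem~\ref{gaussin} as a black box on each sub-mechanism.

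One point worth tightening: the composition rule you call ``standard,'' with failure probability $1-(1-\delta_s)(1-\delta_n)$, is \emph{not} the textbook basic composition theorem for $(\epsilon,\delta)$-DP, which only yields $\delta_s+\delta_n$. The multiplicative form requires the stronger probabilistic-DP statement (privacy loss bounded by $\epsilon_i$ outside a bad event of probability at most $\delta_i$) together with independence of those bad events. The Extended Gaussian Mechanism analysis does establish this stronger statement, and the paper exploits exactly this independence of $\mathscr{V}'_1$ and $\mathscr{V}''_1$ to obtain the product form, so your argument is sound---but you should either unroll the independence step as the paper does or explicitly appeal to probabilistic-DP composition rather than generic $(\epsilon,\delta)$-composition.
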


\section{Performance Evaluation}
\subsection{Simulation Setup}

\begin{figure}
    \centering
    \includegraphics[width=0.27\textwidth]{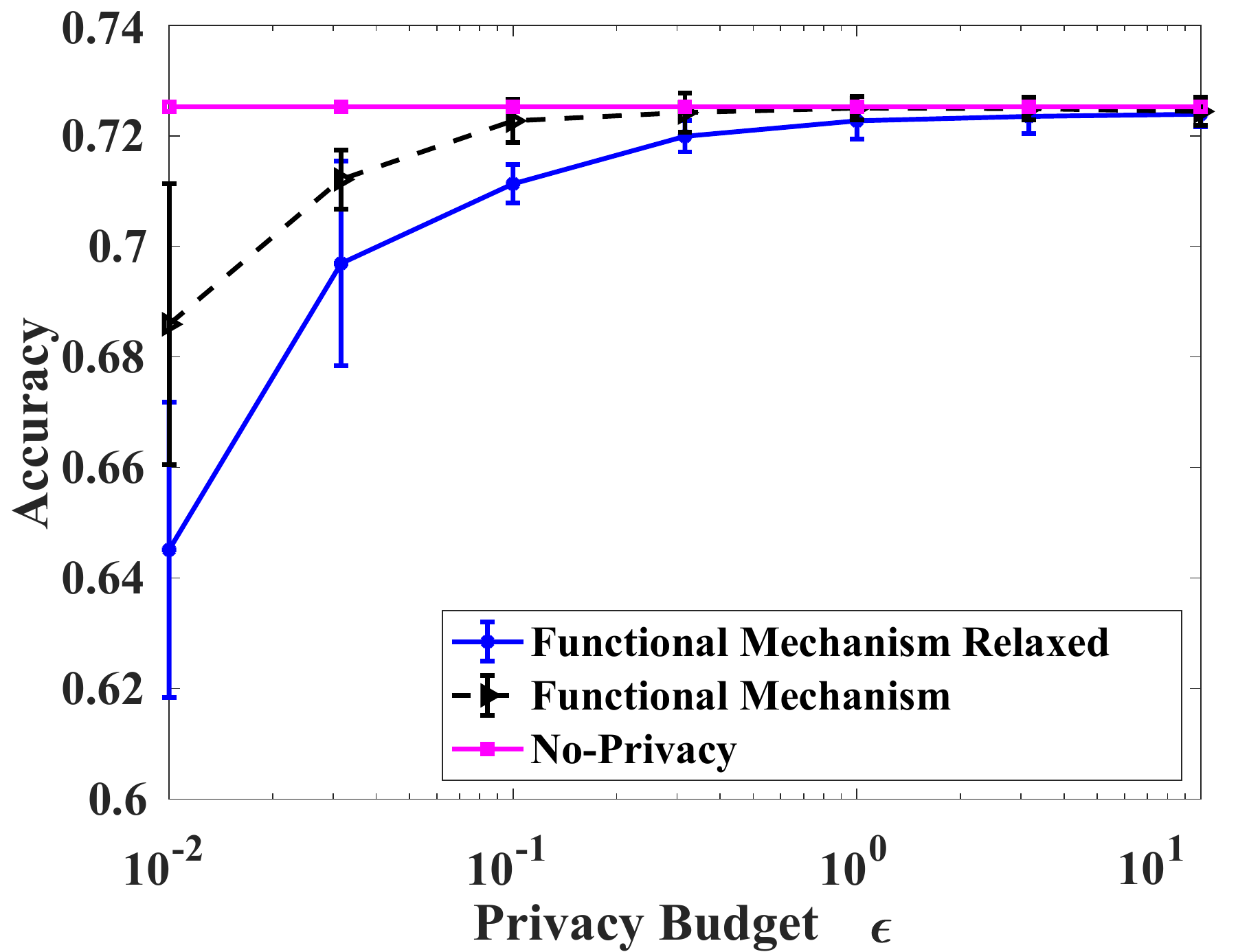}
    \caption{Compare accuracy under different privacy budgets on $\textit{US}.$ ($\delta=10^{-3}$)}
    \label{fig:relaxed}
\end{figure}

\begin{figure}
    \centering
    \includegraphics[width=0.27\textwidth]{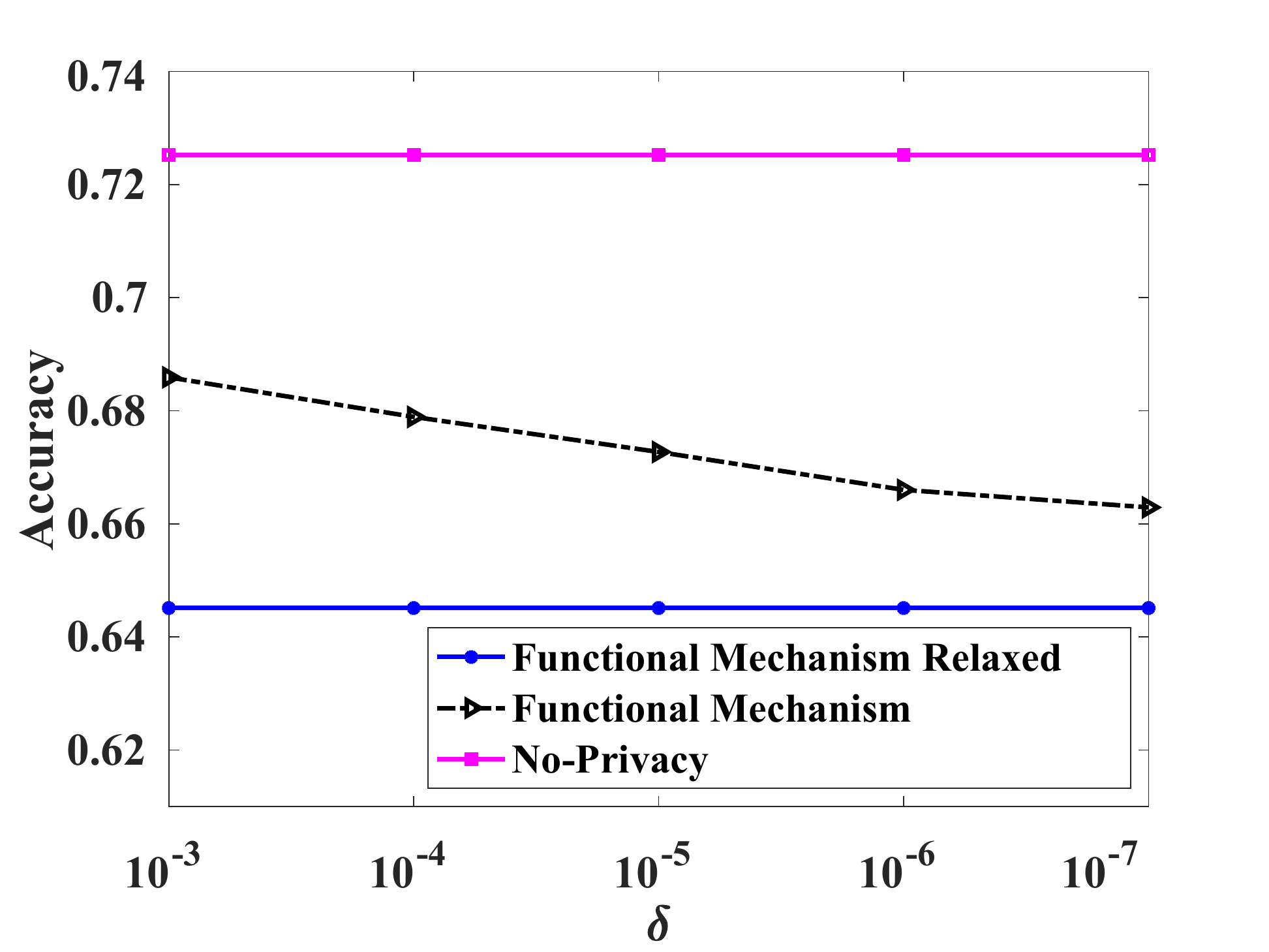}
    \caption{Compare accuracy under different values of $\delta$ on $\textit{US}$.}
    \label{fig:relaxed_delta}
\end{figure}

\begin{figure*} \centering
 \subfigure[\label{Peri_K_p}]
 {\includegraphics[width=2.3in]{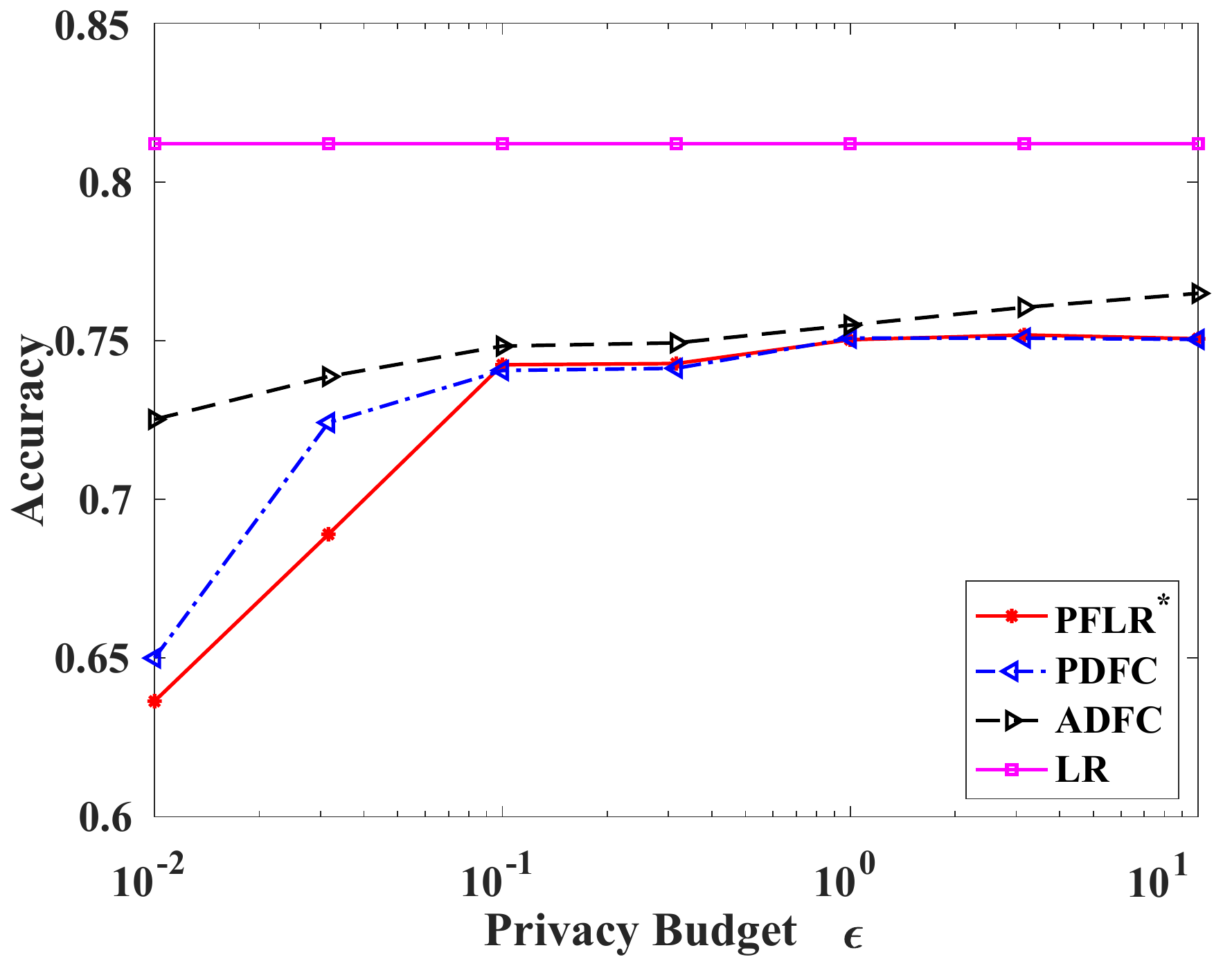}}
\subfigure[ \label{Peri_U}]
  {\includegraphics[width=2.3in]{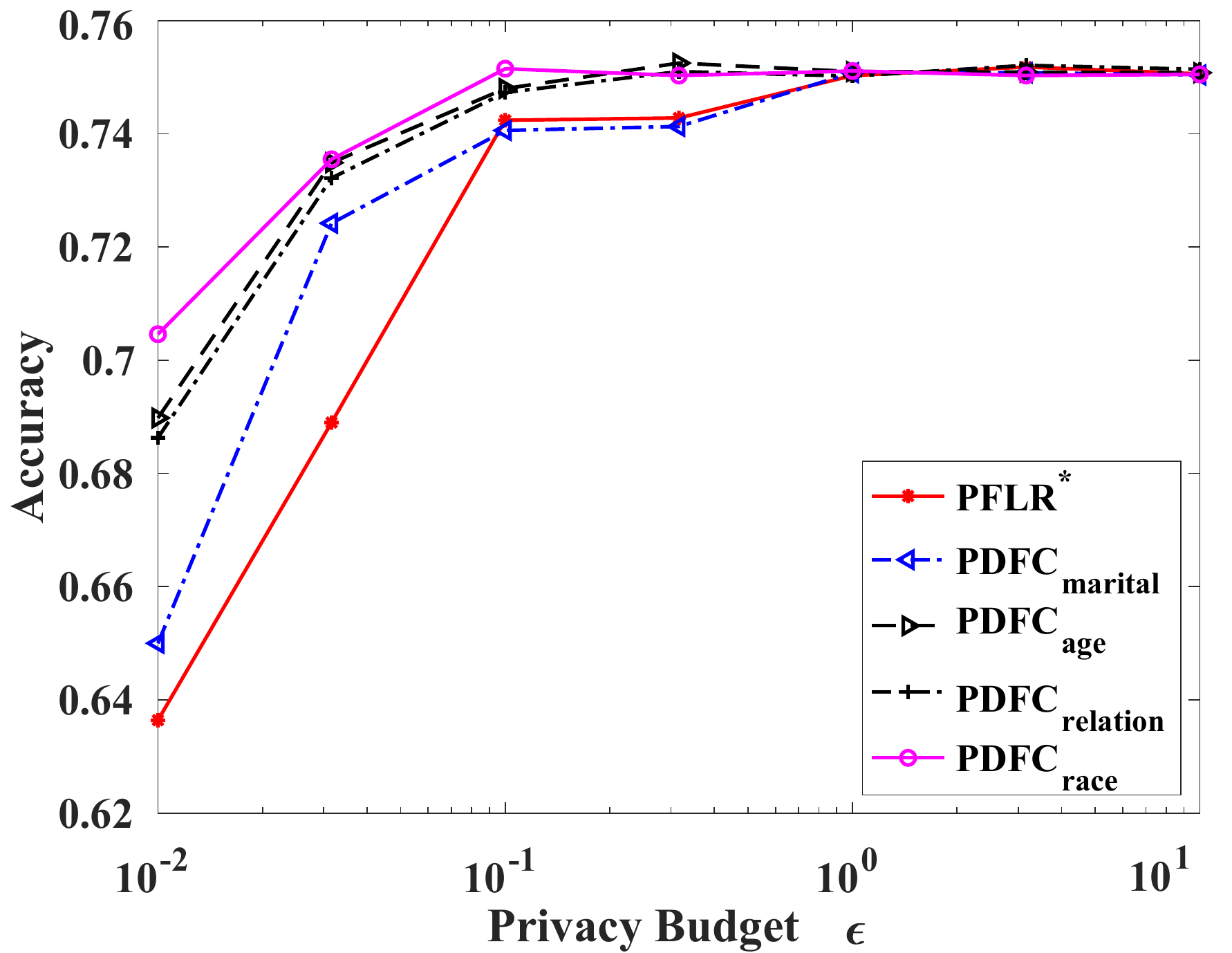}}
     \subfigure[\label{Peri_R_p}]
  {\includegraphics[width=2.3in]{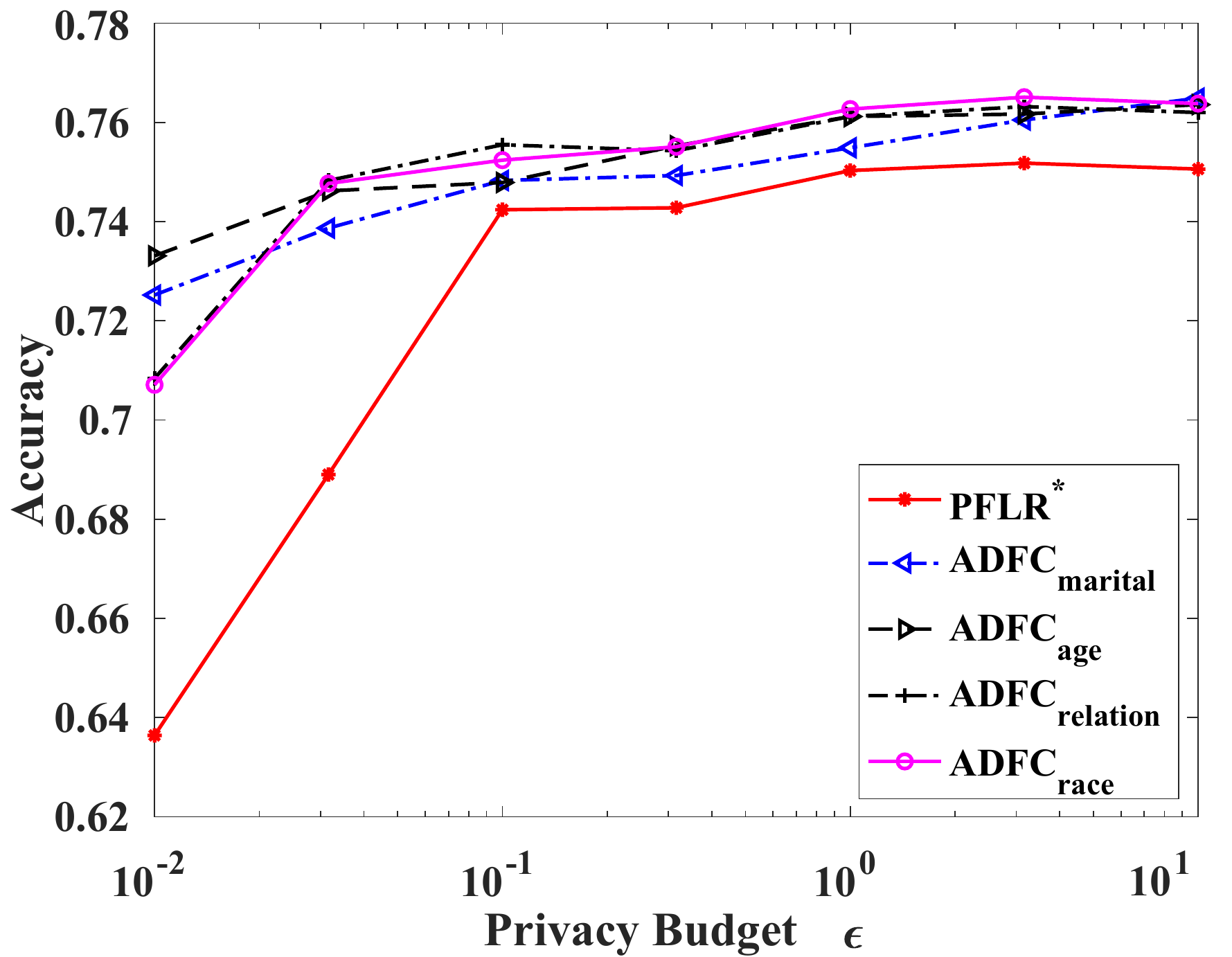}}
 \caption{Compare accuracy under different privacy budgets on $\textit{Adult}$ ($\delta=10^{-3}$).} \label{fig:comp_res}
\end{figure*}
\textbf{Data preprocessing} We evaluate the performance on two datasets, $\textit{Adult}$ dataset and $\textit{US}$ dataset. The $\textit{Adult}$ dataset from UCI Machine Learning Repository \cite{Dua:2017} contains information about 13 different features (e.g., work-class, education, race, age, sex, and so on) of 48,842 individuals. The label is to predict whether the annual income of those individuals is above 50K or not. The $\textit{US}$ dataset is from Integrated Public Use Microdata Series \cite{IPUM-S} and consists of 370,000 records of census microdata, which includes features like age, sex, education, family size, etc. The goal is to predict whether the income is over 25K a year. In both datasets, we consider sex as a binary protected attribute. 

\textbf{Baseline algorithms} In our experiments, we compare our approaches, PDFC, and ADFC against several baseline algorithms, namely, LR and PFLR*. LR is a logistic regression model.
PFLR* \cite{xu2019achieving} is a differentially private and fair logistic regression model that injects Laplace noise with shifted mean to the objective function of logistic regression with fairness constraint. Moreover, we compare our relaxed functional mechanism against the original functional mechanism proposed in \cite{zhang2012functional} and No-Privacy, which is the original functional mechanism without injecting any noise to the polynomial coefficients.

\textbf{Evaluation} The utility of algorithms is measured by $\textit{Accuracy}$, defined as follows, 
\begin{align*}
    Accuracy = \frac{Number~of~correct~predictions}{Total~number~of~predictions~made},
\end{align*}
which demonstrates the quality of a classifier. The fairness of classification models is qualified by $\textit{ risk difference (RD)}$
\begin{align*}
    RD=|\Pr(\hat{y}=1|z=1)-\Pr(\hat{y}=1|z=0)|,
\end{align*}
where $z$ is the protected attribute. We consider a random 80-20 training-testing split and conduct 10 independent runs of algorithms. We then record the mean values and standard deviation values of $\textit{Accuracy}$ and $\textit{RD}$ on the testing dataset. For the parameters of differential privacy, we consider $\epsilon=\{10^{-2},10^{-1.5},10^{-1},10^{0},10^{0.5},10^{1}\}$, and $\delta =\{ 10^{-3}, 10^{-4},10^{-5},10^{-6},10^{-7}\}$.

\subsection{Results and Analysis}

In Figure \ref{fig:relaxed}, we show the 
accuracy of each algorithm, functional mechanism, relaxed functional mechanism and No-Privacy, as a function of the privacy budget with fixed $\delta = 10^{-3}$. We can see that the accuracy of No-Privacy remains unchanged for all values of $\epsilon$, as it does not provide any differential privacy guarantee. Our relaxed functional mechanism exhibits quite higher accuracy than functional mechanism in high privacy regime, and the accuracy of relaxed functional mechanism is the same as No-Privacy baseline when $\epsilon> 10^{-1}$. 
Figure \ref{fig:relaxed_delta} studies the accuracy of each algorithm under different values of $\delta$ with fixed $\epsilon=10^{-2}$. 
Relaxed functional mechanism incurs lower accuracy when $\delta$ decreases, as a smaller $\delta$ requires a larger scale of noise to be injected in the objective function. But the accuracy of functional mechanism remains considerably lower than relaxed functional mechanism in all cases. 

Figure \ref{Peri_K_p} studies the accuracy comparison among PFLR*, LR, PDFC and ADFC on $\textit{Adult}$ dataset with the particular unprotected attribute $x_s$ denoted by marital status. We can observe that ADFC continuously achieves better accuracy than PFLR* in all privacy regime, and PDFC only outperforms PFLR* when $\epsilon$ is small. We also evaluate the effect of choosing different attributes as $x_s$ by performing experiments on $\textit{Adult}$ dataset. As shown in Figure \ref{Peri_U} and Figure \ref{Peri_R_p}, 
choosing different attributes, marital status, age, relation and race, has different effects on the accuracy of PDFC and ADFC. However, PDFC and ADFC still outperform PFLR* under varying values of $\epsilon$. As expected, as the value
of $\epsilon$ increases, the accuracy of each algorithm becomes higher in above three figures.

Table \ref{tab_result} shows how different privacy budgets affect the risk difference of LR, PFLR*, PDFC and ADFC on two datasets. Note that we consider the attribute $x_s$ as race on $\textit{Adult}$ dataset, and work on $\textit{US}$ dataset. It is clear that PDFC and ADFC produce less risk difference compared to PFLR* in most cases of $\epsilon$. The key reason is that adding different amounts of noise regarding different attributes indeed reduces the correlation between unprotected attributes and protected attributes.

\begin{table}[ht]
\renewcommand{\arraystretch}{1}
\centering
\caption{Risk difference with different privacy budgets $\epsilon$ on \textit{US} dataset  ($\delta=10^{-3}$).}
\label{tab_result}
\scalebox{0.683}{
\begin{tabular}{|c|c|c|c|c|c|}
\hline
Data & $\epsilon$ &LR & PFLR* & PDFC &ADFC\\
\hline
\multicolumn{1}{|c|}{\multirow{7}*{$\textit{US}$}} & $0.01 $& $0.191 \pm 0.014$ &$0.037 \pm 0.038$ & $0.003 \pm 0.034$ &$0.004 \pm 0.007$ \\
\cline{2-6}
\multicolumn{1}{|c|}{~} & $0.1$ &$0.191 \pm 0.014$& $0.078 \pm 0.021$& $0.001\pm0.006$&$0.008\pm 0.003$\\ 
\cline{2-6}
 \multicolumn{1}{|c|}{~}& $1$ &$0.191 \pm 0.014$& $0.069\pm 0.007$&$0.022 \pm 0.047$&$0.031\pm 0.004$ \\
\cline{2-6}
\multicolumn{1}{|c|}{~} & $10$ & $0.191 \pm 0.014$& $0.067\pm0.003$& $0.022\pm 0.031$&$0.045\pm0.002$\\
\hline
\end{tabular}
}
\end{table}

\begin{table}[ht]
\renewcommand{\arraystretch}{1}
\centering
\caption{Risk difference with different privacy budgets $\epsilon$ on two datasets ($\delta=10^{-3}$).}
\label{tab_result}
\scalebox{0.683}{
\begin{tabular}{|c|c|c|c|c|c|}
\hline
Data & $\epsilon$ &LR & PFLR* & PDFC &ADFC\\
\hline
\multicolumn{1}{|c|}{\multirow{7}*{$\textit{Adult}$}} & $0.01 $& $0.187  \pm 0.049$ &$0.045 \pm 0.095$ & $0.048 \pm 0.108$ &$0.146 \pm 0.131$ \\
\cline{2-6}
\multicolumn{1}{|c|}{~} & $0.1$ &$0.187  \pm 0.049$& $0.004 \pm 0.009$& $0.005\pm0.022$&$0.068\pm 0.028$\\ 
\cline{2-6}
 \multicolumn{1}{|c|}{~}& $1$ &$0.187  \pm 0.049$& $0.022\pm 0.088$&$0.002 \pm 0.011$&$0.045\pm 0.027$ \\
\cline{2-6}
\multicolumn{1}{|c|}{~} & $10$ & $0.187  \pm 0.049$& $0.003\pm0.001$& $0.035\pm0.041$&$0.019\pm0.003$\\
\hline
\multicolumn{1}{|c|}{\multirow{7}*{$\textit{US}$}} & $0.01 $& $0.191 \pm 0.014$ &$0.037 \pm 0.038$ & $0.003 \pm 0.034$ &$0.004 \pm 0.007$ \\
\cline{2-6}
\multicolumn{1}{|c|}{~} & $0.1$ &$0.191 \pm 0.014$& $0.078 \pm 0.021$& $0.001\pm0.006$&$0.008\pm 0.003$\\ 
\cline{2-6}
 \multicolumn{1}{|c|}{~}& $1$ &$0.191 \pm 0.014$& $0.069\pm 0.007$&$0.022 \pm 0.047$&$0.031\pm 0.004$ \\
\cline{2-6}
\multicolumn{1}{|c|}{~} & $10$ & $0.191 \pm 0.014$& $0.067\pm0.003$& $0.022\pm 0.031$&$0.045\pm0.002$\\
\hline
\end{tabular}
}
\end{table}

\section{Conclusion}

In this paper, we have introduced two approaches, PDFC and ADFC, to address the discrimination and privacy concerns in logistic regression classification. Different from existing techniques, in both approaches, we consider leveraging functional mechanism to the objective function with decision boundary fairness constraints, and adding noise with different magnitudes into the coefficients of different attributes to further reduce the discrimination and improve the privacy protection. Moreover, for ADFC, we utilize the proposed relaxed functional mechanism that is built upon Extended Gaussian mechanism, to further improve the model accuracy. By performing extensive empirical comparisons with state-of-the-art methods for differentially private and fair classification, we demonstrated the effectiveness of proposed approaches.

\section{Acknowledgments}
The work of J. Ding, X. Zhang, and M. Pan was supported in part by the U.S. National Science Foundation under grants US CNS-1350230 (CAREER), CNS-1646607, CNS-1702850, and CNS-1801925. The work of X. Li was supported in part by the Programs of NSFC under Grant 61762030, in part by the Guangxi Natural Science Foundation under Grant 2018GXNSFDA281013, and in part by the Key Science and Technology Project of Guangxi under Grant AA18242021.

\bibliography{6057-ding}
\bibliographystyle{aaai}
\addtocounter{lemma}{1}
\begin{lemma} \label{L1}
Let $\mathscr{D}$ and $\mathscr{D}'$ be any two neighboring datasets differing in at most one tuple. Let $\Bar{f}(\mathscr{D},w)$ and $\Bar{f}(\mathscr{D}',w)$ be the approximate objective function on $\mathscr{D}$ and $\mathscr{D}'$, then we have the following inequality,
\begin{align*}
    \Delta_1 = \sum_{j=1}^2 \sum_{\phi \in \Phi_j} \|\sum_{i=1}^n\Bar{\lambda}_{\phi t_{i}} -\sum_{i=1}^n\Bar{\lambda}_{\phi t_{i}'} \|_1 \leq \frac{d^2}{4}+3d.
\end{align*}
\end{lemma}
\begin{proof}
Assume that $\mathscr{D}$ and $\mathscr{D}'$ differ in the last tuple $t_n$ and $t'_n$. We have that
\begin{align}
    &\Delta_1 = \sum_{j=1}^2 \sum_{\phi \in \Phi_j} \|\sum_{i=1}^n\Bar{\lambda}_{\phi t_{i}} -\sum_{i=1}^n\Bar{\lambda}_{\phi t_{i}'} \|_1 \nonumber\\
    &= \sum_{j=1}^2 \sum_{\phi \in \Phi_j} \|\Bar{\lambda}_{\phi t_{n}} -\Bar{\lambda}_{\phi t_{n}'} \|_1 \nonumber\\
    &\leq 2 \max_{t=(\x,y)}\sum_{j=1}^2 \sum_{\phi \in \Phi_j}  \|\Bar{\lambda}_{\phi t}\|_1 \nonumber\\
    & \leq 2 \max_{t=(\x,y)}    (\frac{1}{2}-y_i + |z_i-\Bar{z}|)\sum_{e=1}^d x_{(e)} + \frac{1}{8}\sum_{1\leq e,l \leq d} x_{(e)}x_{(l)} \nonumber\\
    &\leq \frac{d^2}{4}+3d\nonumber
\end{align}
where $x_{(e)}$ represents $e$-th element in feature vector $\x$.
\end{proof}

\addtocounter{theorem}{2}
\begin{theorem}
The output model parameter $\hat{w}$ in PDFC (Algorithm 1) guarantees $\epsilon$-differential privacy, where $\epsilon = \frac{1}{d}\epsilon_s+ \frac{d-1}{d}\epsilon_n$.
\end{theorem}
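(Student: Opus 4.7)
The plan is to mirror the standard functional–mechanism analysis but track the two Laplace scales separately. First I would fix any two neighboring datasets $\mathscr{D}, \mathscr{D}'$ that differ only in their last tuple ($t_n$ vs.\ $t_n'$), observe that $\hat{w}$ is a deterministic post-processing of the noisy coefficient vector $\{\hat\lambda_\phi\}$, and hence reduce the problem to proving that the mapping $\mathscr{D}\mapsto \{\hat\lambda_\phi\}$ is $\epsilon$-DP with $\epsilon=\epsilon_s/d+\epsilon_n(d-1)/d$.

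Next I would write the joint density of $\{\hat\lambda_\phi\}$ as a product of independent Laplace densities, split into the two groups $\Phi_s$ and $\Phi_n$ built in Algorithm~1, so that
\begin{align*}
\frac{\Pr(\{\hat\lambda_\phi\}\mid\mathscr{D})}{\Pr(\{\hat\lambda_\phi\}\mid\mathscr{D}')}
&\leq \prod_{\phi\in\Phi_s}\exp\!\Bigl(\tfrac{\epsilon_s}{\Delta_1}\|\bar\lambda_{\phi t_n}-\bar\lambda_{\phi t_n'}\|_1\Bigr)\\
&\quad\cdot\prod_{\phi\in\Phi_n}\exp\!\Bigl(\tfrac{\epsilon_n}{\Delta_1}\|\bar\lambda_{\phi t_n}-\bar\lambda_{\phi t_n'}\|_1\Bigr).
\end{align*}
Taking logs, the bound reduces to showing
$\tfrac{\epsilon_s}{\Delta_1}S_s+\tfrac{\epsilon_n}{\Delta_1}S_n\leq \epsilon_s/d+\epsilon_n(d-1)/d$,
where $S_s:=\sum_{\phi\in\Phi_s}\|\bar\lambda_{\phi t_n}-\bar\lambda_{\phi t_n'}\|_1$ and similarly $S_n$.

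The main obstacle, and the step I expect to do the real work, is showing the symmetric decomposition $S_s\leq \Delta_1/d$ and $S_n\leq \Delta_1(d-1)/d$. My approach here follows the trick in \cite{wang2015regression}: by the bound in Lemma~\ref{L1}, each $\|\bar\lambda_{\phi t}\|_1$ for $\phi\in\Phi_1\cup\Phi_2$ comes from a term of the form $(\tfrac12-y+|z-\bar z|)x_{(e)}$ or $\tfrac18 x_{(e)}x_{(\ell)}$. Grouping the degree-$1$ terms by the index $e$ and the degree-$2$ terms by the unordered pair $\{e,\ell\}$, each coordinate $e$ appears in a uniform fraction of the total $\ell_1$-mass; summing only those monomials that contain $w_s$ therefore accounts for exactly $1/d$ of $2\max_t\sum_{j\leq 2}\sum_{\phi\in\Phi_j}\|\bar\lambda_{\phi t}\|_1\leq \Delta_1$. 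The same symmetry argument gives $S_n\leq (d-1)\Delta_1/d$ for the complementary group.

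Finally, substituting these two sensitivity bounds into the exponent yields
\[
\log\frac{\Pr(\{\hat\lambda_\phi\}\mid\mathscr{D})}{\Pr(\{\hat\lambda_\phi\}\mid\mathscr{D}')}\leq \frac{\epsilon_s}{d}+\frac{(d-1)\epsilon_n}{d}=\epsilon,
\]
so the perturbed coefficient vector is $\epsilon$-DP. Applying the post-processing property of differential privacy to $\hat w=\arg\min_w \hat f(\mathscr{D},w)$ then completes the proof.
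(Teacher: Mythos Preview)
Your proposal is correct and follows essentially the same route as the paper: fix neighboring datasets, write the Laplace density ratio as a product over $\Phi_s$ and $\Phi_n$, bound each factor by the corresponding $\ell_1$ coefficient difference, and then invoke the \cite{wang2015regression} argument to get the $1/d$ versus $(d-1)/d$ split of the total sensitivity. The only cosmetic differences are that you spell out the symmetry heuristic behind the cited result and make the post-processing step for $\hat w$ explicit, whereas the paper simply cites \cite{wang2015regression} for the final equality and leaves post-processing implicit.
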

\begin{proof}
We assume there are two neighboring datasets $\mathscr{D}$ and $\mathscr{D}'$ that differ in the last tuple $t_n$ and $t_n'$. As shown in the Algorithm 1, all polynomial coefficients $\phi$ are divided into two subsets $\Phi_s$ and $\phi_n$ in view of whether they include sensitive attribute $x_s$ or not. After adding Laplace noise, we have
\begin{align*}
    &\Pr \left(\hat{f}(\mathscr{D},w)\right)\\ &= \prod_{\phi\in\Phi_{s} }\exp{\bigg(\frac{\epsilon_s \|\sum_{i=1}^n\Bar{\lambda}_{\phi t_{i}}-\hat{\lambda}_{\phi}\|_1}{\Delta_1}}\bigg)\\
    &~~~~~\prod_{\phi\in\Phi_{n} }\exp{\bigg(\frac{\epsilon_n \|\sum_{i=1}^n\bar{\lambda}_{\phi t_{i}}-\hat{\lambda}_{\phi}\|_1}{\Delta_1}\bigg)}
\end{align*}
Then, the following inequality holds
\begin{align*}
    &\frac{\Pr\left(\hat{f}(\mathscr{D},w)\right)}{\Pr \left(\hat{f}(\mathscr{D}',w)\right)} \\
    &\leq \prod_{\phi\in\Phi_{s} }
    \exp{\bigg(\frac{\epsilon_s
    \|\sum_{i=1}^n \Bar{\lambda}_{\phi t_{i}}
    -\sum_{i=1}^n\Bar{\lambda}_{\phi t_{i}'}\|_1 }{\Delta_1}\bigg)}\\
    &~~~~~\prod_{\phi\in\Phi_{n} }
    \exp{\bigg(\frac{\epsilon_n
    \|\sum_{i=1}^n \Bar{\lambda}_{\phi t_{i}}
    -\sum_{i=1}^n\Bar{\lambda}_{\phi t_{i}'}\|_1 }{\Delta_1}\bigg)}
    \\
    &\leq \prod_{\phi\in\Phi_{s} }
    \exp{\bigg(\frac{\epsilon_s
    \|\Bar{\lambda}_{\phi t_{n}}
    -\Bar{\lambda}_{\phi t_{n}'}\|_1 }{\Delta_1}\bigg)}\\
    &~~~~~\prod_{\phi\in\Phi_{n} }
    \exp{\bigg(\frac{\epsilon_n
    \|\Bar{\lambda}_{\phi t_{n}}
    -\Bar{\lambda}_{\phi t_{n}'}\|_1 }{\Delta_1}\bigg)}\\
    &\leq \prod_{\phi\in\Phi_{s} } \exp{(\frac{\epsilon_s}{\Delta_1} 2 \max_{t} \|\lambda_{\phi t}\|_1 )} \prod_{\phi\in\Phi_{n} } \exp{(\frac{\epsilon_n}{\Delta_1} 2 \max_{t} \|\lambda_{\phi t}\|_1)}\\
    &= \exp{(\epsilon_s /d + \epsilon_n (d-1)/d)}\\
    &=\exp{(\epsilon)}.
\end{align*}
In the last second equality, we directly adopt the result in \cite{wang2015regression}.

\end{proof}

\begin{lemma}[\bf{$l_2$-Sensitivity of Logistic Regression}]\label{relaxed_senstivity}
For polynomial representations of logistic regression, two $f(\mathscr{D},w)$ and $f(\mathscr{D}',w)$ given in Lemma 1, we have the following inequality
\begin{align*}
    \Delta_2 =\|\mathscr{A}_1-\mathscr{A}_2\|_2 \leq \sqrt{\frac{d^2}{16}+d},
\end{align*}
where we denote $\mathscr{A}_1=\left\{\sum_{i=1}^n {\lambda}_{\phi t_{i}} \right\}_{\phi \in \cup_{j=1}^J \Phi_{j}} $ and $\mathscr{A}_2=\left\{\sum_{i=1}^n {\lambda}_{\phi t_{i}'} \right\}_{\phi \in \cup_{j=1}^J \Phi_{j}} $ as the set of polynomial coefficients of $f(\mathscr{D},w)$ and $f(\mathscr{D}',w)$. And we denote $t_i$ or $t_i'$ as an arbitrary tuple.
\end{lemma}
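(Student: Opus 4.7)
The plan is to mirror the derivation of Lemma 1, upgrading its coordinate-wise $\ell_1$ sums to a single $\ell_2$ bound on the global coefficient vector.

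First, I would reduce to a single-tuple comparison. Since $\mathscr{D}$ and $\mathscr{D}'$ are neighboring, assume without loss of generality that they agree on every tuple except the last ($t_n$ vs.\ $t_n'$). All shared per-tuple contributions to $\mathscr{A}_1$ and $\mathscr{A}_2$ cancel, leaving $\mathscr{A}_1 - \mathscr{A}_2 = \lambda_{t_n} - \lambda_{t_n'}$, where $\lambda_t$ denotes the vector $\{\lambda_{\phi t}\}_{\phi \in \cup_j \Phi_j}$ of polynomial coefficients produced by a single tuple $t$. The triangle inequality then yields
\[
\|\mathscr{A}_1 - \mathscr{A}_2\|_2 \;\leq\; \|\lambda_{t_n}\|_2 + \|\lambda_{t_n'}\|_2 \;\leq\; 2 \max_{t} \|\lambda_t\|_2,
\]
which is the $\ell_2$ analog of the ``$2\max_t \sum_{j,\phi}\|\lambda_{\phi t}\|_1$'' step appearing in Lemma 1.

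Second, I would read off the per-tuple coefficients from the same second-order Taylor expansion $\log(1+e^z) = \log 2 + z/2 + z^2/8 + O(z^4)$ with $z = \x^T w$ that is used to derive the polynomial representation in Lemma 1. Reading off monomials, the degree-one coefficient of $w_l$ is $(\tfrac{1}{2} - y)\,x_l$, and the degree-two entry indexed by the pair $(l,m)$ in the expansion $(\x^T w)^2 = \sum_{l,m} x_l x_m\, w_l w_m$ is $\tfrac{1}{8}x_l x_m$. Bounding the sum of squared coefficients via the coordinate-wise inequality $|x_l| \leq 1$ (a consequence of the standing assumption $\sqrt{\sum_l x_l^2}\leq 1$) gives $\sum_l((\tfrac{1}{2}-y)x_l)^2 \leq d/4$ for the degree-one part and $\sum_{l,m}(\tfrac{1}{8}x_l x_m)^2 \leq d^2/64$ for the degree-two part, so $\max_t \|\lambda_t\|_2^2 \leq d/4 + d^2/64$.

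Substituting back produces
\[
\Delta_2 \;\leq\; 2\sqrt{\,d/4 + d^2/64\,} \;=\; \sqrt{\,d + d^2/16\,},
\]
which is exactly the claimed bound.

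The main obstacle is choosing the correct granularity for bounding the coefficients. A tighter use of the full $\ell_2$-ball constraint $\sum_l x_l^2 \leq 1$ would actually collapse the degree-two sum to a $d$-independent constant rather than $d^2/64$, giving a smaller but structurally different bound than the one claimed. The matching argument must therefore commit to the weaker coordinate-wise estimate $|x_l|\leq 1$ inside the sums of squares, paralleling precisely the $\sum_l x_l \leq d$ step that produces the $d^2/4 + d$ form of $\Delta_1$ in Lemma 1.
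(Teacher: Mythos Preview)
Your proposal is correct and follows essentially the same approach as the paper: reduce to the single differing tuple, bound by $2\max_t\|\lambda_t\|_2$, read off the degree-one and degree-two per-tuple coefficients $(\tfrac12-y)x_l$ and $\tfrac18 x_l x_m$, and apply the coordinate-wise bound $|x_l|\le 1$ to obtain $\sqrt{d^2/16+d}$. Your closing remark about the granularity choice is apt---the paper indeed uses the loose coordinate-wise estimate rather than the tighter $\sum_l x_l^2\le 1$, exactly as you surmised.
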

\begin{proof}
Assume that $\mathscr{D}$ and $\mathscr{D}'$ differ in the last tuple $t_n$ and $t'_n$. For logistic regression, we have
\begin{align*}
    f(\mathscr{D},w) &= \sum_{i=1}^n \sum_{j=0}^2 \frac{f_{1}^{(j)}(0)}{j!}(\x_i^T w)^j - \left(\sum_{i=1}^n y_i\x_i^T \right)w \nonumber \\
    &=\sum_{i=1}^n \sum_{j=0}^2 \sum_{\phi \in \Phi_j} {\lambda}_{\phi t_{i}} \phi(w)
\end{align*}
where we have 
\begin{align*}
    \{{\lambda}_{\phi t_{i}}\}_{\phi \in \Phi_1} &=:\lambda_{1 t_i} =   \frac{f_{1}^{(1)}(0)}{1!}\x_i - y_i\x_i= (\frac{1}{2}-y_i)\x_i,\\
    \{{\lambda}_{\phi t_{i}}\}_{\phi \in \Phi_2} &=:\lambda_{2 t_i} =   \frac{f_{1}^{(2)}(0)}{2!}\x_i^2=\frac{1}{8}\x_i^2.
\end{align*}
Denote $\mathscr{A}_1=\left\{\sum_{i=1}^n {\lambda}_{\phi t_{i}} \right\}_{\phi \in \cup_{j=1}^2 \Phi_{j}} $ and $\mathscr{A}_2=\left\{\sum_{i=1}^n {\lambda}_{\phi t_{i}'} \right\}_{\phi \in \cup_{j=1}^2 \Phi_{j}} $ as the set of polynomial coefficients of $f(\mathscr{D},w)$ and $f(\mathscr{D}',w)$, and $\mathscr{E}=\begin{pmatrix}
    (\frac{1}{2}-y)x_{(1)} \\ \cdots \\ (\frac{1}{2}-y)x_{(d)} \\ \frac{1}{8}x_{(1)}x_{(1)}\\\cdots\\\frac{1}{8}x_{(d)}x_{(d)}
    \end{pmatrix}_{(d+d^2)\times 1}$, where $x_{(e)}$ represents $e$-th element in feature vector $\x$.
    
Then, we have 
\begin{align*}
    \Delta_2 &=\|\mathscr{A}_1-\mathscr{A}_2\|_2\\
    &= \|\{\sum_{i=1}^n {\lambda}_{\phi t_{i}}-\sum_{i=1}^n {\lambda}_{\phi t_{i}'}\}_{\phi \in \cup_{j=1}^2 \Phi_{j}}\|_2\\
    &=\|\{ {\lambda}_{\phi t_{n}}- {\lambda}_{\phi t_{n}'}\}_{\phi \in \cup_{j=1}^2 \Phi_{j}}\|_2\\
    &\leq 2 \max_{t=(\x,y)} \|\mathscr{E}\|_2\\
    &= 2 \max_{t=(\x,y)} \sqrt{\sum_{j=1}^d((\frac{1}{2}-y)x_j)^2+ \sum_{1\leq e,l \leq d} (\frac{1}{8}x_{(e)}x_{(l)})^2}\\
    &= \sqrt{\frac{d^2}{16}+d},
\end{align*}
where $t$ is an arbitrary tuple.
\end{proof}

\begin{theorem}
The relaxed functional mechanism in Algorithm 2 guarantees $(\epsilon,\delta)$-differential privacy.
\end{theorem}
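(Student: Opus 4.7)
The plan is to reduce the claim to a direct application of the Extended Gaussian Mechanism (Theorem 2) together with the $l_2$-sensitivity bound of Lemma 2, and then invoke post-processing. The query function of interest is
\[
\mathscr{A}(\mathscr{D}) = \left\{\sum_{i=1}^n \lambda_{\phi t_i}\right\}_{\phi \in \cup_{j=1}^J \Phi_j},
\]
and Algorithm 2 releases $\hat{\mathscr{A}} = \mathscr{A}(\mathscr{D}) + \mathcal{N}(0,\sigma^2 I)$ with the $\sigma$ specified in the algorithm. By Lemma 2, the $l_2$-sensitivity of $\mathscr{A}(\cdot)$ is at most $\sqrt{d^2/16 + d} = \Delta_2$, which is exactly the value plugged into the noise scale. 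So the first step is to verify that $\sigma$ matches the lower bound required by Theorem 2 (a routine check).

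Next, I would argue that the final output $\hat{w} = \arg\min_w \hat{f}(\mathscr{D},w)$ depends on $\mathscr{D}$ only through $\hat{\mathscr{A}}$, since $\hat{f}(\mathscr{D},w) = \sum_{j,\phi}\hat{\lambda}_\phi \phi(w)$ is a deterministic function of $\hat{\mathscr{A}}$ and a fixed dictionary $\{\phi(w)\}$. By the post-processing property of $(\epsilon,\delta)$-differential privacy, if $\hat{\mathscr{A}}$ is $(\epsilon,\delta)$-DP then so is $\hat{w}$.

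If instead a self-contained proof is preferred (paralleling the style used for the PDFC theorem), I would expand the log-likelihood ratio as
\[
\left|\log \tfrac{\Pr(\hat{f}(\mathscr{D},w))}{\Pr(\hat{f}(\mathscr{D}',w))}\right| = \tfrac{1}{2\sigma^2}\bigl|\|\mathscr{A}\|_2^2 - \|\mathscr{A}+\mathscr{B}\|_2^2\bigr|,
\]
where $\mathscr{A}=\hat{\mathscr{A}}-\mathscr{A}(\mathscr{D})$ is the Gaussian vector and $\mathscr{B}=\mathscr{A}(\mathscr{D})-\mathscr{A}(\mathscr{D}')$ has $\|\mathscr{B}\|_2 \le \Delta_2$. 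Choosing an orthonormal basis whose first vector $\mathscr{C}_1$ is aligned with $\mathscr{B}$ and writing $\mathscr{A}=\sum_\phi \mathscr{V}_\phi \mathscr{C}_\phi$ with $\mathscr{V}_\phi \sim \mathcal{N}(0,\sigma^2)$, the cross-term reduces to $\|\mathscr{B}\|_2^2 + 2\mathscr{V}_1 \|\mathscr{B}\|_2 \le \Delta_2^2 + 2|\mathscr{V}_1|\Delta_2$ since the orthogonal components cancel.

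The main obstacle is the Gaussian tail step: I must show the event $|\mathscr{V}_1| \le (2\sigma^2\epsilon-1)/2$ occurs with probability at least $1-\delta$, because under this event the privacy loss is bounded by $\epsilon$. Using the standard bound $\Pr(\mathscr{V}_1 > r) \le \tfrac{\sigma}{\sqrt{2}r}\exp(-r^2/(2\sigma^2))$ with $r = (2\sigma^2\epsilon-1)/2$ and substituting the prescribed $\sigma = \tfrac{\sqrt{2}\Delta_2}{2\epsilon}(\sqrt{\log(\sqrt{2/\pi}/\delta)} + \sqrt{\log(\sqrt{2/\pi}/\delta)+\epsilon})$, the exponent is chosen so that the tail is at most $\delta/2$ on each side, giving $\Pr(|\mathscr{V}_1| > r) \le \delta$. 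Algebraically verifying this inequality for the given $\sigma$ is the delicate calculation; once done, the theorem follows.
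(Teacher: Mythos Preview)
Your self-contained argument (the second option you sketch) is essentially identical to the paper's proof: the paper writes the log-likelihood ratio as $\tfrac{1}{2\sigma^2}\bigl|\|\mathscr{A}\|_2^2-\|\mathscr{A}+\mathscr{B}\|_2^2\bigr|$, rotates into an orthonormal basis with first vector aligned to $\mathscr{B}$, reduces to $\Delta_2^2+2|\mathscr{V}_1|\Delta_2$, imposes the event $|\mathscr{V}_1|\le\tfrac{1}{2}(2\sigma^2\epsilon-1)$, and then controls that event via the same Gaussian tail bound $\Pr(\mathscr{V}_1>r)\le\tfrac{\sigma}{\sqrt{2}r}\exp(-r^2/(2\sigma^2))$ with the prescribed $\sigma$.

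Your first option---treating the coefficient vector $\mathscr{A}(\mathscr{D})$ as the query, invoking the Extended Gaussian Mechanism (Theorem~2) with the $l_2$-sensitivity bound $\Delta_2=\sqrt{d^2/16+d}$ from the relevant lemma, and then appealing to post-processing for $\hat w$---is a genuinely shorter and more modular route that the paper does not take. The paper instead re-derives the Gaussian-mechanism privacy analysis inline for this specific query. Your reduction buys brevity and avoids repeating the tail calculation; the paper's inline version is more self-contained and makes the dependence on $\sigma$ explicit, which it later reuses verbatim in the proof of Theorem~5 for ADFC (where two Gaussian noise scales are mixed and a direct black-box appeal to Theorem~2 is no longer available).
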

\begin{proof}
Assume that the neighboring datasets $\mathscr{D}$ and $\mathscr{D}'$ differ in the last tuple $t_n$ and $t'_n$.
\begin{align*}
   &\left|\log \frac{\Pr\left(\hat{f}(\mathscr{D},w)\right)}{\Pr \left(\hat{f}(\mathscr{D}',w)\right)} \right|\\
   &=\left|\log \frac{\prod_{j=1}^J\prod_{\phi\in\Phi_{j} }
    \exp{\left(-\frac{1
    }{2\sigma^2}\left(\sum_{i=1}^n \Bar{\lambda}_{\phi t_{i}}
    -\hat{\lambda}_{\phi}\right)^2 \right)}}{\prod_{j=1}^J\prod_{\phi\in\Phi_{j} }
    \exp{\left(-\frac{1
    }{2\sigma^2}\left(\sum_{i=1}^n \Bar{\lambda}_{\phi t_{i}'}
    -\hat{\lambda}_{\phi}\right)^2 \right)}} \right|\\
    &= \frac{1}{2\sigma^2}\left|  \sum_{j=1}^J\sum_{\phi\in\Phi_{j}}\left(\sum_{i=1}^n \Bar{\lambda}_{\phi t_{i}}
    -\hat{\lambda}_{\phi}\right)^2 \right. \\
    &~~~~~~~~~~~~~~~~~~~~~~~~\left. - \left(\sum_{i=1}^n \Bar{\lambda}_{\phi t_{i}'}
    -\hat{\lambda}_{\phi}\right)^2 \right|\\
    &= \frac{1}{2\sigma^2}\left|\|\mathscr{A}\|_2^2-\|\mathscr{A}+\mathscr{B}\|_2^2\right|,
\end{align*}
where $\mathscr{A}=\left\{\sum_{i=1}^n \Bar{\lambda}_{\phi t_{i}}
    -\hat{\lambda}_{\phi} \right\}_{\phi \in \cup_{j=1}^J \Phi_{j}}$ and $\mathscr{B}=\left\{\sum_{i=1}^n \Bar{\lambda}_{\phi t_{i}'}-\sum_{i=1}^n \Bar{\lambda}_{\phi t_{i}}\right\}_{\phi \in \cup_{j=1}^J \Phi_{j}} $.
    
    We know the fact that the distribution of a spherically symmetric normal is not dependent of the orthogonal basis where its constituent normals are drawn. Thus, we work in a basis aligned with $\mathscr{B}$.
    Fix such a basis $\mathscr{C}_1,\cdots, \mathscr{C}_{\left|\cup_{j=1}^J \Phi_{j}\right|}$ and draw $\mathscr{A}$ by first drawing signed lengths $\mathscr{V}_{\phi}\sim \mathcal{N}(0,\sigma^2)$ for $\phi \in \cup_{j=1}^J \Phi_{j}$, then let $\mathscr{A}_{\phi} = \mathscr{V}_{\phi}\mathscr{C}_{\phi}$ and $\mathscr{A}=\sum_{\phi \in \cup_{j=1}^J \Phi_{j}}\mathscr{A}_{\phi}$. Without loss of generality, we assume that $\mathscr{C}_1$ is parallel to $\mathscr{B}$. Based on the triangle with the base $\mathscr{B}+\mathscr{A}_1$ and the edge $\sum_{\phi =2}^{\left|\cup_{j=1}^J \Phi_{j}\right|}\mathscr{A}_{\phi}$ orthogonal to $\mathscr{B}$, apparently, we have $\|\mathscr{A}+\mathscr{B}\|_2^2-\|\mathscr{A}\|_2^2 = \|\mathscr{B}\|_2^2+ 2\mathscr{C}_1\|\mathscr{B}\|_2$. Since $\|\mathscr{B}\|_2\leq \Delta_2$, we have
    $\left|\log\left(\Pr\left(\hat{f}(\mathscr{D},w)\right)/\Pr \left(\hat{f}(\mathscr{D}',w)\right)\right)\right|\leq \frac{1}{2\sigma^2}\left|\Delta_2^2+2|\mathscr{V}_1|\Delta_2\right|$.
    When $|\mathscr{V}_1|\leq \frac{1}{2}(2\sigma^2\epsilon-1)$, the privacy loss is bounded by $\epsilon(\epsilon>0)$, i.e., $\left|\log\left(\Pr\left(\hat{f}(\mathscr{D},w)\right)/\Pr \left(\hat{f}(\mathscr{D}',w)\right)\right)\right| \leq \epsilon$. Next, we need to prove that the privacy loss is bounded by $\epsilon$ with probability at least $1-\delta$, which requires $\Pr\left(\mathscr{V}_1 >\frac{1}{2}(2\sigma^2\epsilon-1) \right)\leq \delta/2$. Now we use the tail bound of $\mathscr{V}_{1}\sim \mathcal{N}(0,\sigma^2)$, we have 
    \begin{align*}
        \Pr\left(\mathscr{V}_1 > r \right)\leq \frac{\sigma}{\sqrt{2}r}\exp(-\frac{r^2}{2\sigma^2}).
    \end{align*}
    By letting $r =\frac{1}{2}(2\sigma^2\epsilon-1) $ in the above inequality, we have
    \begin{align*}
         \Pr\left(\mathscr{V}_1 > \frac{1}{2}(2\sigma^2\epsilon-1) \right)\leq \frac{\sqrt{2}\sigma}{2\sigma^2\epsilon-1}\exp\left(-\frac{1}{2}\left(\frac{2\sigma^2\epsilon-1}{2\sigma}\right)^2\right)
    \end{align*}
    When $\sigma\geq\frac{\sqrt{2}\Delta_2}{2\epsilon}(\sqrt{\log (\sqrt{\frac{2}{\pi}}\frac{1}{\delta}})+\sqrt{\log (\sqrt{\frac{2}{\pi}}\frac{1}{\delta})+\epsilon})$, $\epsilon>0$ and $\delta$ is very small, we have 
    \begin{align*}
        \Pr\left(\mathscr{V}_1 >\frac{1}{2}(2\sigma^2\epsilon-1) \right)\leq \delta/2.
    \end{align*}
We then can easily prove that 
\begin{align*}
    \Pr\left(|\mathscr{V}_1| \leq\frac{1}{2}(2\sigma^2\epsilon-1) \right) \geq 1-\delta.
\end{align*}
Based on the proof above, we know that the privacy loss $\left|\log\left(\Pr\left(\hat{f}(\mathscr{D},w)\right)/\Pr \left(\hat{f}(\mathscr{D}',w)\right)\right)\right|$ is bounded by $\epsilon$ with probability at least $1-\delta$, which represents the the computation of $\hat{f}(\mathscr{D},w)$ satisfies $(\epsilon,\delta)$-differential privacy. Therefore, Algorithm 2 satisfies $(\epsilon,\delta)$-differential privacy.
\end{proof}

\begin{lemma} \label{L2}
Let $\mathscr{D}$ and $\mathscr{D}'$ be any two neighboring datasets differing in at most one tuple. Let $\Bar{f}(\mathscr{D},w)$ and $\Bar{f}(\mathscr{D}',w)$ be the approximate objective function on $\mathscr{D}$ and $\mathscr{D}'$, then we have the following inequality,
\begin{align*}
    \Delta_2' =\|\mathscr{A}_1'-\mathscr{A}_2'\|_2 \leq \sqrt{\frac{d^2}{16}+9d}.
\end{align*}
where we denote $\mathscr{A}_1'=\left\{\sum_{i=1}^n \bar{\lambda}_{\phi t_{i}} \right\}_{\phi \in \cup_{j=1}^2 \Phi_{j}} $ and $\mathscr{A}_2'=\left\{\sum_{i=1}^n \bar{\lambda}_{\phi t_{i}'} \right\}_{\phi \in \cup_{j=1}^2 \Phi_{j}} $ as the set of polynomial coefficients of $\Bar{f}(\mathscr{D},w)$ and $\Bar{f}(\mathscr{D}',w)$. And we denote $t_i$ or $t_i'$ as an arbitrary tuple.
\end{lemma}
\begin{proof}
Assume that $\mathscr{D}$ and $\mathscr{D}'$ differ in the last tuple $t_n$ and $t'_n$. For objective $\Bar{f}(\mathscr{D},w)$, we have
\begin{align*}
    \Bar{f}(\mathscr{D},w)&= \sum_{i=1}^n \sum_{j=0}^2 \frac{f_{1}^{(j)}(0)}{j!}(\x_i^T w)^j - \left(\sum_{i=1}^n y_i\x_i^T \right)w \nonumber \\
    &~~~+\left|\sum_{i=1}^n (z_i-\Bar{z})\x_i^T w\right| \nonumber\\
    &=\sum_{i=1}^n \sum_{j=0}^2 \sum_{\phi \in \Phi_j}\Bar{\lambda}_{\phi t_{i}} \phi(w),
\end{align*}
where we have 
\begin{align*}
    \{{\Bar{\lambda}}_{\phi t_{i}}\}_{\phi \in \Phi_1} &=:\Bar\lambda_{1 t_i} = (\frac{1}{2}-y_i+|z_i-\Bar{z}|)\x_i,\\
    \{{\Bar\lambda}_{\phi t_{i}}\}_{\phi \in \Phi_2} &=:\Bar\lambda_{2 t_i} =\frac{1}{8}\x_i^2.
\end{align*}
Denote $\mathscr{A}_1'=\left\{\sum_{i=1}^n \bar{\lambda}_{\phi t_{i}} \right\}_{\phi \in \cup_{j=1}^2 \Phi_{j}} $ and $\mathscr{A}_2'=\left\{\sum_{i=1}^n \bar{\lambda}_{\phi t_{i}'} \right\}_{\phi \in \cup_{j=1}^2 \Phi_{j}} $ as the set of polynomial coefficients of $\Bar{f}(\mathscr{D},w)$ and $\Bar{f}(\mathscr{D}',w)$, and $\mathscr{E}=\begin{pmatrix}
    (\frac{1}{2}-y+|z-\Bar{z}|)x_{(1)} \\ \cdots \\ (\frac{1}{2}-y+|z-\Bar{z}|)x_{(d)} \\ \frac{1}{8}x_{(1)}x_{(1)}\\\cdots\\\frac{1}{8}x_{(d)}x_{(d)}
    \end{pmatrix}_{(d+d^2)\times 1}$, where $x_{(e)}$ represents $e$-th element in feature vector $\x$.
    
Then, we have 
\begin{align*}
    \Delta_2 &=\|\mathscr{A}'_1-\mathscr{A}'_2\|_2\\
    &= \|\{\sum_{i=1}^n {\Bar\lambda}_{\phi t_{i}}-\sum_{i=1}^n {\Bar\lambda}_{\phi t_{i}'}\}_{\phi \in \cup_{j=1}^2 \Phi_{j}}\|_2\\
    &=\|\{ {\Bar\lambda}_{\phi t_{n}}- {\Bar\lambda}_{\phi t_{n}'}\}_{\phi \in \cup_{j=1}^2 \Phi_{j}}\|_2\\
    &\leq 2 \max_{t=(\x,z,y)} \|\mathscr{E}\|_2\\
    &= 2 \max_{t=(\x,z,y)}\\ &~~~~~\sqrt{\sum_{j=1}^d((\frac{1}{2}-y+ |z-\Bar{z}|)x_j)^2+ \sum_{1\leq e,l \leq d} (\frac{1}{8}x_{(e)}x_{(l)})^2}\\
    &= \sqrt{\frac{d^2}{16}+9d},
\end{align*}
where $t$ is an arbitrary tuple.
\end{proof}

\begin{theorem}
The output model parameter $\hat{w}$ in ADFC (Algorithm 3) guarantees $(\epsilon,\delta)$-differential privacy, where $\epsilon = \frac{1}{d}\epsilon_s+ \frac{d-1}{d}\epsilon_n$ and $\delta = 1-(1-\delta_s)(1-\delta_n)$.
\end{theorem}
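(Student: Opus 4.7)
The plan is to combine the argument of Theorem 3 (relaxed functional mechanism) with the attribute-splitting strategy used in Theorem 2 (PDFC). Fix two neighboring datasets $\mathscr{D}$ and $\mathscr{D}'$ differing only in the last tuple $t_n$ vs.\ $t_n'$, and write the log-density ratio of $\hat{f}(\mathscr{D},w)$ to $\hat{f}(\mathscr{D}',w)$ as a sum of two independent contributions, one indexed by $\phi \in \Phi_s$ (variance $\sigma_s^2$) and one by $\phi \in \Phi_n$ (variance $\sigma_n^2$). By independence of the Gaussian noise on the two groups, the full mechanism's privacy guarantee factorizes across these groups.

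For each group separately, I would mimic the geometric argument from the proof of Theorem 3. Letting $\mathscr{A}'$ and $\mathscr{B}'$ be the noise/offset vectors restricted to $\Phi_s$, rotate into a basis aligned with $\mathscr{B}'$ so that
\begin{align*}
    \tfrac{1}{2\sigma_s^2}\bigl|\|\mathscr{A}'\|_2^2 - \|\mathscr{A}'+\mathscr{B}'\|_2^2\bigr|
    \;\leq\; \tfrac{1}{2\sigma_s^2}\bigl(\|\mathscr{B}'\|_2^2 + 2|\mathscr{V}_1'|\,\|\mathscr{B}'\|_2\bigr),
\end{align*}
and do the analogous thing for $\Phi_n$. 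The key observation is that the coefficient changes split so that $\|\mathscr{B}'\|_2 \leq \tfrac{1}{d}\Delta_2'$ (since only the $w_s$-involving monomials contribute to this group, roughly $1/d$ of the full $\ell_2$-sensitivity $\Delta_2'$ of Lemma~\ref{L2}) while $\|\mathscr{B}''\|_2 \leq \tfrac{d-1}{d}\Delta_2'$. Plugging these bounds into the chosen $\sigma_s$ and $\sigma_n$ of Algorithm 3, and restricting to the events $|\mathscr{V}_1'|\leq \tfrac{1}{2}(2\sigma_s^2\epsilon_s - 1)$ and $|\mathscr{V}_1''|\leq \tfrac{1}{2}(2\sigma_n^2\epsilon_n - 1)$, gives group-wise privacy losses bounded by $\tfrac{\epsilon_s}{d}$ and $\tfrac{(d-1)\epsilon_n}{d}$ respectively, which sum to $\epsilon$.

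It remains to control the probability that these two bounding events hold simultaneously. Since the Gaussian coordinates across $\Phi_s$ and $\Phi_n$ are independent, the joint probability factors, and by the Gaussian tail estimate used in Theorem 3, the choices $\sigma_s \geq \frac{\sqrt{2}\Delta_2'}{2\epsilon_s}\bigl(\sqrt{\log(\sqrt{2/\pi}/\delta_s)} + \sqrt{\log(\sqrt{2/\pi}/\delta_s) + \epsilon_s}\bigr)$ and the analogous one for $\sigma_n$ guarantee $\Pr(|\mathscr{V}_1'|\leq \tfrac{1}{2}(2\sigma_s^2\epsilon_s-1)) \geq 1-\delta_s$ and $\Pr(|\mathscr{V}_1''|\leq \tfrac{1}{2}(2\sigma_n^2\epsilon_n-1)) \geq 1-\delta_n$. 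Multiplying gives joint probability at least $(1-\delta_s)(1-\delta_n)$, so the bad event has probability at most $\delta = 1-(1-\delta_s)(1-\delta_n)$, establishing $(\epsilon,\delta)$-DP of $\hat{f}(\mathscr{D},w)$ and, by post-processing, of $\hat{w}$.

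The main obstacle I expect is justifying the sensitivity split $\|\mathscr{B}'\|_2 \leq \tfrac{1}{d}\Delta_2'$ and $\|\mathscr{B}''\|_2 \leq \tfrac{d-1}{d}\Delta_2'$ cleanly. This mirrors the $1/d$ vs.\ $(d-1)/d$ decomposition used in the PDFC theorem (and implicitly borrowed from \cite{wang2015regression}), but here the sensitivity is in $\ell_2$ rather than $\ell_1$, so one must carefully argue that the coefficients involving $w_s$ account for a $1/d$-fraction of the squared-norm budget. Once that split is in hand, both the per-group Gaussian-mechanism analysis and the union-bound-by-independence combination proceed in close parallel to the proof of Theorem 3.
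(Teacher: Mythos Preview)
Your proposal is correct and follows essentially the same approach as the paper's own proof: split the log-density ratio over $\Phi_s$ and $\Phi_n$ via the triangle inequality, apply the rotation/geometric argument from Theorem~3 to each block with the asserted sensitivity split $\|\mathscr{B}'\|_2 \leq \tfrac{1}{d}\Delta_2'$ and $\|\mathscr{B}''\|_2 \leq \tfrac{d-1}{d}\Delta_2'$, bound each block's privacy loss on the corresponding high-probability event, and multiply the independent event probabilities to obtain $\delta = 1-(1-\delta_s)(1-\delta_n)$. The obstacle you flag---justifying the $1/d$ vs.\ $(d-1)/d$ split of the $\ell_2$-sensitivity---is handled in the paper exactly as you anticipate, by asserting it in parallel with the PDFC/$\ell_1$ argument borrowed from \cite{wang2015regression}.
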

\begin{proof}
Assume that the neighboring datasets $\mathscr{D}$ and $\mathscr{D}'$ differ in the last tuple $t_n$ and $t'_n$.
\begin{align*}
    \Pr\left(\hat{f}(\mathscr{D},w)\right) &=\prod_{\phi\in\Phi_{s} }
    \exp{\left(-\frac{1
    }{2\sigma_s^2}\left(\sum_{i=1}^n \Bar{\lambda}_{\phi t_{i}}
    -\hat{\lambda}_{\phi}\right)^2 \right)}\\
    &~~~~~~~\prod_{\phi\in\Phi_{n} }
    \exp{\left(-\frac{1
    }{2\sigma_n^2}\left(\sum_{i=1}^n \Bar{\lambda}_{\phi t_{i}}
    -\hat{\lambda}_{\phi}\right)^2 \right)}
\end{align*}
\begin{align*}
   &\left|\log \frac{\Pr\left(\hat{f}(\mathscr{D},w)\right)}{\Pr \left(\hat{f}(\mathscr{D}',w)\right)} \right|\\
   &=\left| \frac{1}{2\sigma_s^2}\sum_{\phi\in\Phi_{s}}\left(\left(\sum_{i=1}^n \Bar{\lambda}_{\phi t_{i}}
    -\hat{\lambda}_{\phi}\right)^2-\left(\sum_{i=1}^n \Bar{\lambda}_{\phi t_{i}'}
    -\hat{\lambda}_{\phi}\right)^2\right)\right.\\
    &~~~\left.+ \frac{1}{2\sigma_n^2}\sum_{\phi\in\Phi_{n}}\left(\left(\sum_{i=1}^n \Bar{\lambda}_{\phi t_{i}}
    -\hat{\lambda}_{\phi}\right)^2-\left(\sum_{i=1}^n \Bar{\lambda}_{\phi t_{i}'}
    -\hat{\lambda}_{\phi}\right)^2\right)\right|\\
    &\leq\left| \frac{1}{2\sigma_s^2}\sum_{\phi\in\Phi_{s}}\left(\left(\sum_{i=1}^n \Bar{\lambda}_{\phi t_{i}}
    -\hat{\lambda}_{\phi}\right)^2-\left(\sum_{i=1}^n \Bar{\lambda}_{\phi t_{i}'}
    -\hat{\lambda}_{\phi}\right)^2\right)\right|\\
    &~+\left| \frac{1}{2\sigma_n^2}\sum_{\phi\in\Phi_{n}}\left(\left(\sum_{i=1}^n \Bar{\lambda}_{\phi t_{i}}
    -\hat{\lambda}_{\phi}\right)^2-\left(\sum_{i=1}^n \Bar{\lambda}_{\phi t_{i}'}
    -\hat{\lambda}_{\phi}\right)^2\right)\right|\\
    &= \frac{1}{2\sigma_s^2}\left|\|\mathscr{A}'\|_2^2-\|\mathscr{A}'+\mathscr{B}'\|_2^2\right|\\
    &~~~~~~~~~~+\frac{1}{2\sigma_n^2}\left|\|\mathscr{A}''\|_2^2-\|\mathscr{A}''+\mathscr{B}''\|_2^2\right|
\end{align*}
where we let $\mathscr{A'}=\left\{\sum_{i=1}^n \Bar{\lambda}_{\phi t_{i}}
    -\hat{\lambda}_{\phi} \right\}_{\phi \in \Phi_{s}}$, $\mathscr{B'}=\left\{\sum_{i=1}^n \Bar{\lambda}_{\phi t_{i}'}-\sum_{i=1}^n \Bar{\lambda}_{\phi t_{i}}\right\}_{\phi \in \Phi_{s}} $, $\mathscr{A''}=\left\{\sum_{i=1}^n \Bar{\lambda}_{\phi t_{i}}
    -\hat{\lambda}_{\phi} \right\}_{\phi \in \Phi_{n}}$ and $\mathscr{B''}=\left\{\sum_{i=1}^n \Bar{\lambda}_{\phi t_{i}'}-\sum_{i=1}^n \Bar{\lambda}_{\phi t_{i}}\right\}_{\phi \in \Phi_{n}} $.
    
    Now we will use the fact that the distribution of a spherically symmetric normal is not dependent of the orthogonal basis where its constituent normals are drawn. Thus, we work in two basis aligned with $\mathscr{B}'$ and $\mathscr{B}''$ separately.
    Fix the basis $\mathscr{C}'_1,\cdots, \mathscr{C}'_{\left|\Phi_{s}\right|}$ of $\mathscr{B}'$ and draw $\mathscr{A}'$ by first drawing signed lengths $\mathscr{V}'_{\phi}\sim \mathcal{N}(0,\sigma_s^2)$ for $\phi \in \Phi_{s}$, then let $\mathscr{A}'_{\phi} = \mathscr{V}'_{\phi}\mathscr{C}'_{\phi}$ and $\mathscr{A}'=\sum_{\phi \in\Phi_{s}}\mathscr{A}'_{\phi}$. Fix the basis $\mathscr{C}''_1,\cdots, \mathscr{C}''_{\left|\Phi_{n}\right|}$ of $\mathscr{B}''$ and draw $\mathscr{A}''$ by first drawing signed lengths $\mathscr{V}''_{\phi}\sim \mathcal{N}(0,\sigma_n^2)$ for $\phi \in \Phi_{n}$, then let $\mathscr{A}''_{\phi} = \mathscr{V}''_{\phi}\mathscr{C}''_{\phi}$ and $\mathscr{A}''=\sum_{\phi \in\Phi_{n}}\mathscr{A}''_{\phi}$. 
    
    Without loss of generality, we assume that $\mathscr{C}'_1$ is parallel to $\mathscr{B}'$ and $\mathscr{C}''_1$ is parallel to $\mathscr{B}''$. Based on the triangle with the base $\mathscr{B}'+\mathscr{A}'_1$ and the edge $\sum_{\phi =2}^{\left| \Phi_{s}\right|}\mathscr{A}'_{\phi}$ orthogonal to $\mathscr{B}'$, we have $\|\mathscr{A}'+\mathscr{B}'\|_2^2-\|\mathscr{A}'\|_2^2 = \|\mathscr{B}'\|_2^2+ 2\mathscr{C}'_1\|\mathscr{B}'\|_2$. Since $\|\mathscr{B}'\|_2\leq \frac{1}{d}\Delta_2'$, we have
    $\frac{1}{2\sigma_s^2}\left|\|\mathscr{A}'\|_2^2-\|\mathscr{A}'+\mathscr{B}'\|_2^2\right|\leq \frac{1}{2d\sigma_s^2}\left|{\Delta'}_2^{2}+2|\mathscr{V}'_1|\Delta'_2\right|$. Similarly, consider that the triangle with the base $\mathscr{B}''+\mathscr{A}''_1$ and the edge $\sum_{\phi =2}^{\left| \Phi_{n}\right|}\mathscr{A}''_{\phi}$ orthogonal to $\mathscr{B}''$, we have $\|\mathscr{A}''+\mathscr{B}''\|_2^2-\|\mathscr{A}''\|_2^2 = \|\mathscr{B}''\|_2^2+ 2\mathscr{C}''_1\|\mathscr{B}''\|_2$. Since $\|\mathscr{B}''\|_2\leq \frac{d-1}{d}\Delta_2$, we have
    $\frac{1}{2\sigma_n^2}\left|\|\mathscr{A}''\|_2^2-\|\mathscr{A}''+\mathscr{B}''\|_2^2\right|\leq \frac{d-1}{2d\sigma_n^2}\left|{\Delta'}_2^2+2|\mathscr{V}''_1|\Delta'_2\right|$. 
    When set $|\mathscr{V}'_1|\leq \frac{1}{2}(2\sigma_s^2\epsilon_s-1)$ and $|\mathscr{V}''_1|\leq \frac{1}{2}(2\sigma_n^2\epsilon_n-1)$, we have $ \frac{1}{2d\sigma_s^2}\left|{\Delta'}_2^2+2|\mathscr{V}'_1|\Delta'_2\right| \leq \frac{1}{d}\epsilon_s$ and $\frac{d-1}{2d\sigma_n^2}\left|{\Delta'}_2^2+2|\mathscr{V}''_1|\Delta'_2\right|\leq \frac{d-1}{d}\epsilon_n$. Thus, we have the privacy loss $\left|\log\left(\Pr\left(\hat{f}(\mathscr{D},w)\right)/\Pr \left(\hat{f}(\mathscr{D}',w)\right)\right)\right|\leq \frac{1}{d}\epsilon_s+ \frac{d-1}{d}\epsilon_n = \epsilon$. 
    
    To ensure the privacy loss is bounded by $\epsilon$ with probability at least $1-\delta$, we require 
    \begin{align*}
        &\Pr\left(|\mathscr{V}'_1|\leq \frac{1}{2}(2\sigma_s^2\epsilon_s-1),|\mathscr{V}''_1| \leq \frac{1}{2}(2\sigma_n^2\epsilon_n-1)\right)\\
        &=\Pr\left(|\mathscr{V}'_1|\leq \frac{1}{2}(2\sigma_s^2\epsilon_s-1)\right)\Pr\left(|\mathscr{V}''_1| \leq \frac{1}{2}(2\sigma_n^2\epsilon_n-1)\right)\\
        &\geq 1-\delta.
    \end{align*}
    Now we will give the upper bound of $\Pr\left(|\mathscr{V}'_1|\leq \frac{1}{2}(2\sigma_s^2\epsilon_s-1)\right)$ by using the tail bound of $\mathscr{V}'_{1}\sim \mathcal{N}(0,\sigma_s^2)$. Hence, we have $\Pr\left(\mathscr{V}'_1 > r' \right)\leq \frac{\sigma_s}{\sqrt{2}r'}\exp(-\frac{r'^2}{2\sigma_s^2})$. By letting $r' =\frac{1}{2}(2\sigma_s^2\epsilon_s-1) $ in the above inequality, we have $\Pr\left(\mathscr{V}'_1 > \frac{1}{2}(2\sigma_s^2\epsilon_s-1) \right)\leq \frac{\sqrt{2}\sigma_s}{2\sigma_s^2\epsilon_s-1}\exp\left(-\frac{1}{2}\left(\frac{2\sigma_s^2\epsilon_s-1}{2\sigma_s}\right)^2\right)$. When $\sigma_s\geq\frac{\sqrt{2}\Delta'_2}{2\epsilon_s}(\sqrt{\log (\sqrt{\frac{2}{\pi}}\frac{1}{\delta}})+\sqrt{\log (\sqrt{\frac{2}{\pi}}\frac{1}{\delta_s})+\epsilon_s})$, $\epsilon_s>0$ and $\delta_s$ is very small, we have $\Pr\left(\mathscr{V}'_1 >\frac{1}{2}(2\sigma_s^2\epsilon_s-1) \right)\leq \delta_s/2$. Thus, we can prove that $\Pr\left(|\mathscr{V}'_1| \leq\frac{1}{2}(2\sigma_s^2\epsilon_s-1) \right) \geq 1-\delta_s$. In the same way, we can prove $\Pr\left(|\mathscr{V}''_1| \leq\frac{1}{2}(2\sigma_n^2\epsilon_n-1) \right) \geq 1-\delta_n$. Therefore, if we let $\delta = 1-(1-\delta_s)(1-\delta_n)$, we have
     \begin{align*}
        \Pr\left(|\mathscr{V}'_1|\leq \frac{1}{2}(2\sigma_s^2\epsilon_s-1),|\mathscr{V}''_1| \leq \frac{1}{2}(2\sigma_n^2\epsilon_n-1)\right)\geq 1-\delta,
    \end{align*}
    which proves that the computation of $\hat{f}(\mathscr{D},w)$ satisfies $(\epsilon,\delta)$-differential privacy. Apparently, the final result $\hat{w}$ also satisfies $(\epsilon,\delta)$-differential privacy.
\end{proof}

\end{document}